\documentclass[journal]{IEEEtran}
\usepackage{amsmath}
\usepackage{amssymb}
\usepackage{amsfonts}
\usepackage{graphicx}
\usepackage[width=18.5cm,height=24.5cm]{geometry}
\usepackage{enumerate,color,graphicx,fancybox,pifont,epsf,epsfig,subfigure,amsmath,amssymb,psfrag,amsthm}
\usepackage{algorithm}
\usepackage{algorithmic}
\usepackage{cite}
\newcounter{MYtempeqncnt}

\newtheorem{ex}{\textbf{Example}}
\newtheorem{theorem}{\textbf{Theorem}}

\newtheorem{proposition}[theorem]{\textbf{Proposition}}

\newcommand{\secref}[1]{Section~\ref{#1}}

\newcommand{\figref}[1]{Figure~\ref{#1}}

\newcommand{\proref}[1]{Proposition~\ref{#1}}

\newcommand{\remref}[1]{Remark~\ref{#1}}

\newcommand{\algref}[1]{Algorithm~\ref{#1}}
\newcommand{\exampref}[1]{Example~\ref{#1}}

\newtheorem{rem}{\textbf{Remark}}

\title{Joint Source-Channel Vector Quantization for \\ Compressed Sensing}
\author{Amirpasha Shirazinia, \textit{Student Member, IEEE}, Saikat Chatterjee, \textit{Member, IEEE}, Mikael Skoglund, \textit{Senior Member, IEEE} \\
}

\begin{document}
\maketitle

\begin{abstract}
We study joint source-channel coding (JSCC) of compressed sensing (CS) measurements using vector quantizer (VQ). We develop a framework for realizing optimum JSCC schemes that enable encoding and transmitting CS measurements of a sparse source over discrete memoryless channels, and decoding the sparse source signal. For this purpose, the optimal design of encoder-decoder pair of a VQ is considered, where the optimality is addressed by minimizing end-to-end mean square error (MSE). We derive a theoretical lower-bound on the MSE performance, and propose a practical encoder-decoder design through an iterative algorithm. The resulting coding scheme is referred to as channel-optimized VQ for CS, coined COVQ-CS. In order to address the encoding complexity issue of the COVQ-CS, we propose to use a structured quantizer, namely low complexity multi-stage VQ (MSVQ). We derive new encoding and decoding conditions for the MSVQ, and then propose a practical encoder-decoder design algorithm referred to as channel-optimized MSVQ for CS, coined COMSVQ-CS. Through simulation studies, we compare the proposed schemes vis-a-vis relevant quantizers. 
\end{abstract}
\begin{IEEEkeywords}
Vector quantization, multi-stage vector quantization, joint source-channel coding, noisy channel, compressed sensing, sparsity, mean square error. 
\end{IEEEkeywords}

\section{Introduction} \label{sec:intro}

Compressed sensing (CS) \cite{08:Candes} considers retrieving a \textit{high-dimensional} sparse vector $\mathbf{X}$ from relatively \textit{lower} number of measurements. In many practical applications, the collected measurements at a CS sensor node need to be \textit{encoded} using finite bits and \textit{transmitted} over \textit{noisy} communication \textit{channels}. To do so, efficient design of source and channel codes should be considered for reliable transmission of the CS measurements over noisy channels. The optimum performance theoretically attainable in a point-to-point memoryless channel can be achieved using separate design of source and channel codes, but this performance requires infinite source and channel code block lengths resulting in delay as well as coding complexity. Considering finite-length sparse source and CS measurement vector, it is theoretically guaranteed that joint source-channel coding (JSCC) can provide better performance than a separate design of source and channel codes. Therefore, to design a practical coding method, we focus on optimal JSCC principles for CS in the current work. \textit{Denoting the reconstruction vector by $\widehat{\mathbf{X}}$ at a decoder, our main objective is to develop a generic framework for optimum JSCC of CS measurements using vector quantization, or in other words, optimum joint source-channel vector quantization for CS, such that $\mathbb{E}[\|\mathbf{X} - \widehat{\mathbf{X}}\|_2^2]$ is minimized.}

\subsection{Background}
Recently, significant research interest has been devoted to design and analysis of source coding, e.g. quantization, for CS, and a wide range of problems has been formulated. Existing work on this topic is mainly divided into three categories.

\begin{enumerate}

    \item The first category considers optimum quantizer design for quantization of CS measurements, where a CS reconstruction algorithm is held fixed at the decoder. Examples include \cite{09:Sun} and \cite{11:Kamilov}, where CS reconstruction algorithms are LASSO and message passing, respectively. Based on analysis-by-synthesis principle, we have recently developed a quantizer design method in \cite{13:Pasha_journal}, where any CS reconstruction algorithm can be used.

    \item The second category considers the design of a \textit{good} CS reconstruction algorithm, where the quantizer is held fixed. CS reconstruction from noisy measurements -- where the noise properties follow the effect of quantization -- falls in the category. Examples are \cite{10:Sinan,10:Zymnis,11:Dai,11:Jacques,12:Pasha1,12:Kamilov,08:Boufounos,12:Yan,13:Jacques,13:Plan}. To elaborate, let us consider \cite{11:Jacques} where CS measurements are uniformly quantized and a convex optimization-based CS reconstruction algorithm, called basis pursuit dequantizing (BPDQ), is developed  to suit the effect of uniform quantization. Further, the design of CS reconstruction algorithms and their performance bounds for reconstructing a sparse source from 1-bit quantized measurements have been investigated in \cite{08:Boufounos,12:Yan,13:Jacques,13:Plan}.

    \item Another line of previous work focuses on trade-offs between the quantization resources (e.g., quantization rate) and CS resources (e.g., number of measurements or complexity of CS reconstruction) \cite{06:Pai,08:Goyal,11:Dai,12:Laska}. For example, in \cite{12:Laska}, a trade-off between number of measurements and quantization rate was established by introducing the concept of two compression regimes as quantification of resources -- quantization compression regime and CS compression regime.

\end{enumerate}

We mention that all the above works are dedicated to pure source coding through quantization of CS measurements. To the best of our knowledge, there is limited work on JSCC of CS measurements using vector quantizer (VQ). In this regard, we had our previous effort in \cite{13:Pasha-icassp2}. The current paper is build upon the work of \cite{13:Pasha-icassp2}, and provides a comprehensive framework for developing optimum JSCC schemes to encode and transmit CS measurements (of a sparse source $\mathbf{X}$) over discrete memoryless channels, and to decode the sparse source so as to provide the reconstruction $\widehat{\mathbf{X}}$. The optimality is addressed by minimizing the MSE performance measure $\mathbb{E}[\|\mathbf{X} - \widehat{\mathbf{X}}\|_2^2]$. 

\subsection{Contributions}

We first consider the optimal design of VQ encoder-decoder pair for CS in the sense of minimizing the MSE. Here, we stress that we use the VQ in its \textit{generic} form. This is different from the design methods using uniform quantization \cite{11:Jacques} or 1-bit quantization of CS measurements  \cite{08:Boufounos,12:Yan,13:Jacques,13:Plan}. Our contributions include
\begin{itemize}
    \item Establishing (necessary) optimal encoding and decoding conditions for VQ.
    \item Providing a theoretical bound on the MSE performance.
    \item Developing a practical VQ encoder-decoder design through an iterative algorithm.
    \item Addressing the encoding complexity issue of VQ using a structured quantizer, namely low complexity multistage VQ (MSVQ), where we derive new encoder-decoder conditions for sub-optimal design of the MSVQ.
\end{itemize}
Our practical encoder-decoder designs consider Channel-Optimized VQ for CS, coined COVQ-CS, and Channel-Optimized MSVQ for CS, coined COMSVQ-CS. To demonstrate the strength of the proposed designs, we compare them with relevant quantizer design methods through different simulation studies. Particularly, we show that in noisy channel scenarios, the proposed COVQ-CS and COMSVQ-CS schemes provide better and more robust (against channel noise) performances compared to existing quantizers for CS followed by separate channel coding.

\subsection{Outline}
The rest of the paper is organized as follows. In \secref{sec:pre}, we introduce some preliminaries of CS. The optimal design and performance analysis of a joint source-channel VQ for CS are presented in \secref{sec:COVQ}. In \secref{subsec:train}, we propose a practical VQ encoder-decoder design algorithm. Further, in \secref{sec:COMSVQ}, we deal with complexity issue by proposing the design of computationally- and memory-efficient MSVQ for CS. The performance comparison of the proposed quantization schemes with other relevant methods are made in \secref{sec:numerical}, and conclusions are drawn in \secref{sec:conclusion}.

\subsection{Notations}

\textit{Notations:} Random variables (RV's) will be denoted by upper-case letters while their realizations (instants) will be denoted by the respective lower-case letters. Random vectors of dimension $n$ will be represented by boldface characters. 
We will denote a sequence of RV's $J_1,\ldots,J_N$ by $\mathbf{J}_1^N$; further, $\mathbf{J}_1^N = \mathbf{j}_1^N$ implies that $J_1=j_1,\ldots,J_N=j_N$. Matrices will be denoted by capital Greek letters, except that the square identity matrix of dimension $n$ is denoted by $\mathbf{I}_n$. The matrix operators determinant, trace, transpose and the maximum eigenvalue of a matrix are denoted by $\text{det}(\cdot)$, $\text{Tr}(\cdot)$, $(\cdot)^\top$, and $\lambda_{\max}(\cdot)$, respectively. Further, cardinality of a set is shown by $|\cdot|$.
We will use~$\mathbb{E}[\cdot]$ to denote the expectation operator. The $\ell_p$-norm ($p > 0$) of a vector $\mathbf{z}$ will be denoted by $\|\mathbf{z}\|_p = (\sum_{n=1}^N |z_n|^p)^{1/p}$. Also, $\|\mathbf{z}\|_0$ represents $\ell_0$-norm which is the number of non-zero coefficients in $\mathbf{z}$. 

\section{Preliminaries of CS} \label{sec:pre}
In CS, a random sparse vector (where most coefficients are likely zero) $\mathbf{X} \! \in  \! \mathbb{R}^N$ is linearly measured by a known sensing matrix $\mathbf{\Phi} \! \in \! \mathbb{R}^{M \times N}$ ($M \! < \! N$) resulting in an under-determined set of linear measurements (possibly) perturbed by noise
\begin{equation} \label{eq:measurement}
    \mathbf{Y = \Phi X + W},
\end{equation}
where $\mathbf{Y} \in \mathbb{R}^M$ and $\mathbf{W} \in \mathbb{R}^M$ denote the measurement and the additive measurement noise vectors, respectively. We assume that $\mathbf{X}$ is a $K$-sparse vector, i.e., it has at most $K$ ($K \leq M$) non-zero coefficients, where the location and magnitude of the non-zero components are drawn from known distributions. We also assume that the sparsity level $K$ is known in advance. We define the support set of the sparse vector $\mathbf{X} = [X_1,\ldots,X_N]^\top$ as $\mathcal{S} \triangleq \{n : X_n \neq 0 \} \subset \{1,\ldots,N\}$ with $|\mathcal{S}| = \|\mathbf{X}\|_0 \leq K$. Next, we define the mutual coherence notion which characterizes the merit of a sensing matrix $\mathbf{\Phi}$. The mutual coherence is defined as \cite{01:Donoho}
\begin{equation} \label{eq:mutual co}
    \mu \triangleq  \underset{i \neq j}{\max} \hspace{0.2cm} \frac{|\mathbf{\Phi}_i^\top \mathbf{\Phi}_j|}{\|\mathbf{\Phi}_i\|_2 \|\mathbf{\Phi}_j\|_2}, \hspace{0.2cm} 1 \leq i,j \leq N,
\end{equation}
where $\mathbf{\Phi}_i$ denotes the $i^{th}$ column of $\mathbf{\Phi}$. The mutual coherence $0 \leq \mu \leq 1$ formalizes the dependence between the columns of $\mathbf{\Phi}$, and can be calculated in polynomial-time complexity.

In order to reconstruct an unknown sparse source from a noisy under-sampled measurement vector, several reconstruction methods have been developed based on convex optimization methods, iterative greedy search algorithms and Bayesian estimation approaches. In this paper, through the design and analysis procedures, we adopt the Bayesian framework~\cite{07:Larsson,08:Ji,09:Elad,10:Protter,12:Kun} for reconstructing a sparse source from noisy and quantized measurements.

In the subsequent sections, we describe our proposed design methods for quantization by observing the CS measurement vector, and then develop theoretical results.

\section{Joint Source-Channel VQ for CS} \label{sec:COVQ}

In this section, we first introduce a general joint source-channel VQ system model for CS measurements in \secref{subsec:VQ}. We derive necessary conditions for optimality of encoder-decoder pair in \secref{sec:design COVQ}. Thereafter, we investigate the effects of optimal conditions in \secref{subsec:effetc}, and proceed to analysis of performance in \secref{subsec:bound COVQ-CS}. 

\begin{figure}
  \begin{center}
  \psfrag{x}[][][0.75]{$\mathbf{X}$}
  \psfrag{A}[][][1]{$\mathbf{\Phi}$}
  \psfrag{y}[][][0.75]{$\mathbf{Y}$}
  \psfrag{w}[][][0.75]{$\mathbf{W}$}
  \psfrag{Q}[][][0.85]{$\textsf{E}$}
  \psfrag{C}[][][1]{$\mathcal{C}$}
  \psfrag{i}[][][0.75]{$I$}
  \psfrag{j}[][][0.75]{$J$}
  \psfrag{DMC}[][][0.8]{$P(j|i)$}
  \psfrag{Channel}[][][0.75]{Channel}
  \psfrag{quant}[][][0.75]{Quantizer}
  \psfrag{Enc}[][][0.75]{encoder}
  \psfrag{CS Enc}[][][0.75]{CS sensing}
  \psfrag{Dec}[][][0.75]{Decoder}
  \psfrag{Dec}[][][0.85]{Decoder}
  \psfrag{D}[][][0.85]{$\textsf{D}$}
  \psfrag{x_h}[][][0.75]{$\widehat{\mathbf{X}}$}
  \includegraphics[width=9cm]{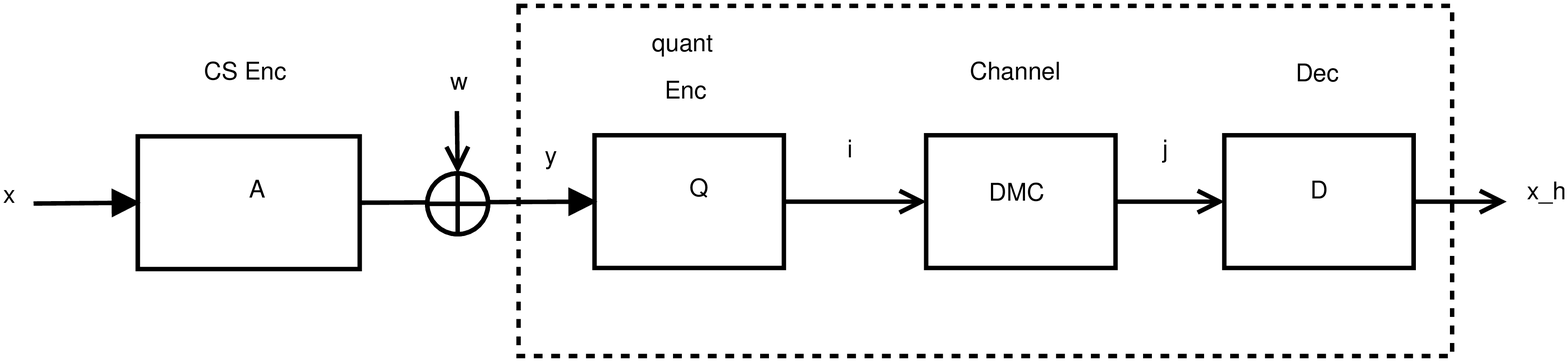}\\
  \caption{Studied system model for joint source-channel vector quantization of CS measurements. The goal is to design encoder and decoder mappings (illustrated in dashed box) with respect to minimizing $\mathbb{E}[\|\mathbf{X} - \widehat{\mathbf{X}}\|_2^2]$ while the CS sensing matrix $\mathbf{\Phi}$ and channel $P(j|i)$ are known in advance.}\label{fig:diagram_VQ}
  \end{center}
\end{figure}

\subsection{General System Description and Performance Criterion} \label{subsec:VQ}

Consider the general system model, shown in \figref{fig:diagram_VQ}, for transmitting CS measurements and reconstructing a sparse source. Let the total bit budget allocated for encoding (quantization) be fixed at $R$ bits per dimension of the source vector. Given the noisy measurement vector $\mathbf{Y}$, a VQ encoder is defined by a mapping $\textsf{E}:\! \mathbb{R}^M \rightarrow \! \mathcal{I}$, where $\mathcal{I}$ is a finite index set defined as $\mathcal{I} \!\triangleq \{0,1,\ldots,2^{R}-1\}$ with $|\mathcal{I}| \! \triangleq \mathfrak{R}=2^{R}$. Denoting the quantized index by $I$, the encoder works according to $\mathbf{Y} \! \in \mathcal{R}_i \Rightarrow I \!= i$, where the sets $\{\mathcal{R}_i\}_{i=0}^{\mathfrak{R}\!-1}$ are encoder regions and $\bigcup_{i=0}^{\mathfrak{R}\!-1} \mathcal{R}_{i} =\! \mathbb{R}^M$ such that when $\mathbf{Y} \in \mathcal{R}_i$ the encoder outputs the index $\textsf{E}(\mathbf{Y}) =\! i \in \mathcal{I}$. Note that given an index $i$, the set $\mathcal{R}_{i}$ is not necessarily a connected set (due to non-linear CS reconstruction) in the space $\mathbb{R}^M$. Also, $\mathcal{R}_{i}$ might be an empty set (due to channel noise, see e.g. \cite{91:Farvardin}).

Next, we consider a memoryless channel consisting of discrete input and output alphabets which is referred to as discrete memoryless channel (DMC). 
In our problem setup, the DMC accepts the encoded index $i$ and outputs a noisy symbol $j \!\in\! \mathcal{I}$. The channel is defined by a random mapping characterized by transition probabilities
\begin{equation} \label{eq:channel trans}
P(j | i) \triangleq \textrm{Pr}(J = j | I = i), \hspace{0.15cm} i,j \in \mathcal{I},
\end{equation}
which indicates the probability that index $j$ is received given that the input index to the channel was $i$. We assume that the transmitted index $i$ and the received index $j$ share the same index set $\mathcal{I}$, and the channel transition probabilities \eqref{eq:channel trans} are known in advance. We denote the capacity of a given channel by $C$ bits/channel use. Given the received index $j$, a decoder is characterized by a mapping $\textsf{D}\!: \mathcal{I} \!\rightarrow \! \mathcal{C}$ where $\mathcal{C}$ is a finite discrete \textit{codebook} set containing all reproduction \textit{codevectors} $\{\mathbf{c}_{j} \!\in \!\mathbb{R}^N\}_{j=0}^{\mathfrak{R}\!-\!1}$. The decoder's functionality is described by a look-up table; $J \!=\! j \Rightarrow \widehat{\mathbf{X}} \!=\! \mathbf{c}_j$ such that when the received index from the channel is $j$, the decoder outputs  $\textsf{D}(j)\!=\!\mathbf{c}_j \!\in\! \mathcal{C}$.

Next, we state how we quantify the performance of \figref{fig:diagram_VQ} and our design goal.
It is important to design an encoder-decoder pair in order to minimize a distortion measure which reflects the requirements of the receiving-end user. Therefore, we quantify the source reconstruction distortion of our studied system by the end-to-end MSE defined as
\begin{equation} \label{eq:e2e dist}
    D \triangleq \mathbb{E}[\|\mathbf{X - \widehat{X}}\|_2^2],
\end{equation}
where the expectation is taken with respect to the distributions on the sparse source $\mathbf{X}$ (which, itself, depends on the distribution of non-zero coefficients in $\mathbf{X}$ as well as their random placements (sparsity pattern)), the noise $\mathbf{W}$ and the randomness in the channel. We mention that the end-to-end MSE depends on \textit{CS reconstruction error}, \textit{quantization error} as well as \textit{channel noise}. While the CS sensing matrix $\mathbf{\Phi}$ is given, our concern is to design an encoder-decoder pair robust against all these three kinds of error.
\subsection{Optimality Conditions for VQ Encoder and Decoder} \label{sec:design COVQ}
We consider an optimization technique for the system illustrated in \figref{fig:diagram_VQ} in order to determine encoder and decoder mappings \textsf{E} and \textsf{D}, respectively, in the presence of channel noise. More precisely, the aim of the VQ design is to find
\begin{itemize}
    \item MSE-minimizing encoder regions $\{\mathcal{R}_i\}_{i=0}^{\mathfrak{R}-1}$ and
    \item MSE-minimizing decoder codebook $\mathcal{C} = \{\mathbf{c}_j\}_{j=0}^{\mathfrak{R}-1}$.
\end{itemize}
We note that the optimal joint design of encoder and decoder cannot be implemented since the resulting optimization is analytically intractable. To address this issue, in \secref{subsec:opt enc}, we show how the encoding index $i \in \mathcal{I}$ (or equivalently encoder region $\mathcal{R}_i$) can be chosen to minimize the MSE for a given codebook $\mathcal{C}=\{\mathbf{c}_j\}_{j=0}^{\mathfrak{R}-1}$. Then, in \secref{subsec:opt dec}, we derive an expression for the optimal decoder codebook $\mathcal{C}$ for given encoder regions $\{\mathcal{R}_i\}_{i=0}^{\mathfrak{R}-1}$.

\subsubsection{Optimal Encoder} \label{subsec:opt enc}

First, let us introduce the \textit{minimum mean-square error} (MMSE) estimator of the source given the observed measurements \eqref{eq:measurement} which is (see \cite[Chapter 11]{93:Kay})
\begin{equation} \label{eq:MMSE sparse}
    \widetilde{\mathbf{x}}(\mathbf{y}) \triangleq  \mathbb{E} [\mathbf{X} | \mathbf{Y} = \mathbf{y}] \in \mathbb{R}^N.
\end{equation}

Now, assume that the decoder codebook $\mathcal{C}=\{\mathbf{c}_j\}_{j=0}^{\mathfrak{R}-1}$ is known and fixed. We focus on how the encoding index $i$ should be chosen to minimize the MSE given the observed noisy CS measurement vector $\mathbf{y}$. We rewrite the MSE as
\begin{equation} \label{eq:rewrite MSE}
\begin{aligned}
    D &\triangleq \mathbb{E}[\|\mathbf{X - \widehat{X}}\|_2^2] = \mathbb{E}[\|\mathbf{X} - \mathbf{c}_J\|_2^2]& \\
    &\stackrel{(a)}{=} \! \int_{\mathbf{y}} \sum_{i \in \mathcal{I}} \textrm{Pr} \{I\!=\!i | \mathbf{Y\!=\!y}\} \mathbb{E} \left[\|\mathbf{X} \!-\! \mathbf{c}_J \|_2^2 | \mathbf{Y\!=\!y} , I\!=\!i\right] f (\mathbf{y}) d\mathbf{y}& \\
    &\stackrel{(b)}{=} \sum_{i \in \mathcal{I}} \int_{\mathbf{y}  \in \mathcal{R}_i} \bigg\{ \mathbb{E} \left[\|\mathbf{X} - \mathbf{c}_J \|_2^2 | \mathbf{Y=y} , I=i\right] \bigg\} f (\mathbf{y}) d\mathbf{y} ,&
\end{aligned}
\end{equation}
where $(a)$ follows from marginalization of the MSE over $\mathbf{Y}$ and $I$. Further, $f(\mathbf{y})$ is the $M$-fold probability density function (pdf) of the measurement vector.
Also, $(b)$ follows by interchanging the integral and the summation and the fact that $\textrm{Pr} \{I=i | \mathbf{Y=y}\} = 1$, $\forall \mathbf{y} \in \mathcal{R}_i$, and otherwise the probability is zero. Now, since $f(\mathbf{y})$ is always non-negative, the MSE-minimizing points in $\mathbb{R}^M$ that shall be assigned to the encoder region $\mathcal{R}_i$ are those that minimize the term within the braces in the last expression of \eqref{eq:rewrite MSE}. Then, the MSE-minimizing encoding index, denoted by $i^\star \in \mathcal{I}$, is given by
\begin{equation} \label{eq:proof enc}
\begin{aligned}
    i^\star &= \textrm{arg }\underset{i \in \mathcal{I}}{\textrm{min }} \mathbb{E} \left[\|\mathbf{X} - \mathbf{c}_J \|_2^2 | \mathbf{Y=y} , I=i\right]& \\
    &\stackrel{(a)}{=} \textrm{arg }\underset{i \in \mathcal{I}}{\textrm{min }} \big\{ \mathbb{E}[\|\mathbf{c}_J\|_2^2  | \mathbf{Y\!=\!y}, I\!=\!i] \!-\! 2\mathbb{E}[\mathbf{X}^\top \mathbf{c}_J  | \mathbf{Y\!=\!y}, I\!=\!i] \big\}& \\
    &\stackrel{(b)}{=} \textrm{arg }\underset{i \in \mathcal{I}}{\textrm{min }} \big\{ \mathbb{E}[\|\mathbf{c}_J\|_2^2 \big | I=i] \!-\! 2\mathbb{E}[\mathbf{X}^\top \big | \mathbf{Y=y}] \mathbb{E}[\mathbf{c}_J \big | I=i] \big\},&
\end{aligned}
\end{equation}
where $(a)$ follows from the fact that $\mathbf{X}$ is independent of $I$, conditioned on $\mathbf{Y}$; hence, $\mathbb{E} \left[\|\mathbf{X}\|_2^2 | \mathbf{Y\!=\!y}, I\!=\!i \right] = \mathbb{E} \left[\|\mathbf{X}\|_2^2 | \mathbf{Y\!=\!y}\right]$ which is pulled out of the optimization. $(b)$ follows from the fact that $\mathbf{c}_J$ is independent of $\mathbf{Y}$, conditioned on $I$, and from the Markov chain $\mathbf{X} \rightarrow \mathbf{Y} \rightarrow I \rightarrow \mathbf{c}_J$. Next, note that introducing channel transition probabilities $P(j|i)$ in \eqref{eq:channel trans} and the MMSE estimator $\widetilde{\mathbf{x}}(\mathbf{y})$ in \eqref{eq:MMSE sparse}, the last equality in \eqref{eq:proof enc} can be expressed as
\begin{equation} \label{eq:final enc}
    i^\star = \textrm{arg }\underset{i \in \mathcal{I}}{\textrm{min}} \left\{ \sum_{j=0}^{\mathfrak{R}-1} P(j|i) \left\| \mathbf{c}_j \right\|_2^2 - 2 \widetilde{\mathbf{x}}(\mathbf{y})^\top \sum_{j=0}^{\mathfrak{R}-1} P(j|i) \mathbf{c}_j \right\}.
\end{equation}
Equivalently, the optimized encoding regions are obtained by 
\begin{equation} \label{eq:regions}
\begin{aligned}
    \mathcal{R}_{i}^\star  =  &\left\{\mathbf{y} \in \mathbb{R}^M  : \sum_{j=0}^{\mathfrak{R}-1} \left[P(j|i) - P(j|i')\right] \left\| \mathbf{c}_j \right\|_2^2 \leq \right.& \\
    & \hspace{0.5cm} \left. 2 \widetilde{\mathbf{x}}(\mathbf{y})^\top \sum_{j=0}^{\mathfrak{R}-1} \left[P(j|i) - P(j|i') \right] \mathbf{c}_j , i \neq i' \in \mathcal{I} \right\}.&
\end{aligned}
\end{equation}

\subsubsection{Optimal Decoder} \label{subsec:opt dec}

Applying the MSE criterion, it is straightforward to show that the codevectors which minimize $D$ in \eqref{eq:e2e dist} for a fixed encoder are obtained by letting $\mathbf{c}_j$ represent the MMSE estimator of the vector $\mathbf{X}$ based on the received index $j$ from the channel, that is
\begin{equation} \label{eq:opt dec}
    \mathbf{c}_j^\star =  \mathbb{E}[\mathbf{X} | J=j] , \hspace{0.15cm} j \in \mathcal{I}.
\end{equation}
Now, using the Bayes' rule, the expression for $\mathbf{c}_j^\star$ can be rewritten as
\begin{equation} \label{eq:opt dec final}
\begin{aligned}
    \mathbf{c}_j^\star &=  \mathbb{E}[\mathbf{X} | J=j]&\\
    &= \sum_{i} P(i | j) \mathbb{E}[\mathbf{X} | J=j,I=i]&\\
    &\stackrel{(a)}{=}  \frac{\sum_i P(j|i) P(i) \int_{\mathbf{y}}\mathbb{E}[\mathbf{X}|\mathbf{Y\!=\!y}] f(\mathbf{y}|i) d\mathbf{y}}{\sum_i P(j|i) P(i)}& \\
    &\stackrel{(b)}{=} \frac{\sum_i P(j|i) \int_{\mathcal{R}_i} \widetilde{\mathbf{x}}(\mathbf{y}) f(\mathbf{y}) d\mathbf{y} }{\sum_i P(j|i) \int_{\mathcal{R}_i} f(\mathbf{y}) d\mathbf{y}},&
\end{aligned}
\end{equation}
where $(a)$ follows from marginalization over $\mathbf{Y}$ and the Markov chain $\mathbf{X} \rightarrow \mathbf{Y} \rightarrow I$. Moreover, $f(\mathbf{y}|i)$ is the conditional pdf of $\mathbf{Y}$ given that $\mathbf{Y} \in \mathcal{R}_i$. Also, $(b)$ follows by using \eqref{eq:MMSE sparse} and by the fact that $f(\mathbf{y}|i)=0$, $\forall \mathbf{y} \notin \mathcal{R}_i$.

The optimal conditions in \eqref{eq:final enc} and \eqref{eq:opt dec final} can be used in an \textit{alternate-iterate} procedure to design a practical encoder-decoder pair for vector quantization of CS measurements. The resulting algorithm will be presented later in \secref{subsec:train}.

\subsection{Insights Through Analyzing the Optimal Conditions} \label{subsec:effetc}
Here, we provide insights into the necessary optimal conditions \eqref{eq:final enc} and \eqref{eq:opt dec}. Note that the encoding condition \eqref{eq:final enc} implies that the sparse source is first MMSE-wise reconstructed from CS measurements at the encoder, and then quantized to an appropriate index. Hence, it suggests that the system shown in \figref{fig:diagram_VQ} may be translated to the equivalent system shown in \figref{fig:decomposed VQ}.
\begin{figure}[!ht]
    \begin{center}
        \psfrag{x}[][][0.7]{$\mathbf{X}$}
        \psfrag{A}[][][0.75]{$\mathbf{\Phi}$}
        \psfrag{y}[][][0.7]{$\mathbf{Y}$}
        \psfrag{w}[][][0.7]{$\mathbf{W}$}
        \psfrag{Q}[][][0.7]{$\textsf{E}$}
        \psfrag{i}[][][0.7]{$I$}
        \psfrag{j}[][][0.7]{$J$}
        \psfrag{DMC}[][][0.7]{$P(j|i)$}
        \psfrag{Channel}[][][0.7]{Channel}
        \psfrag{quant}[][][0.7]{Quantizer}
        \psfrag{Enc}[][][0.7]{encoder}
        \psfrag{CS Enc}[][][0.7]{CS sensing}
        \psfrag{CS D}[][][0.7]{CS decoder}
        \psfrag{A-1}[][][0.7]{\textsf{R}}
        \psfrag{MMSE}[][][0.5]{(MMSE)}
        \psfrag{y_h}[][][0.7]{$\widetilde{\mathbf{X}}$}
        \psfrag{Dec}[][][0.7]{decoder}
        \psfrag{D}[][][0.7]{$\textsf{D}$}
        \psfrag{x_h}[][][0.7]{$\widehat{\mathbf{X}}$}
        \includegraphics[width=\columnwidth,height=1.8cm]{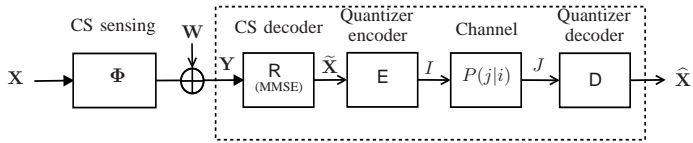}\\
        \caption{ Equivalent block diagram of a system with CS reconstruction $\textsf{R}$ (MMSE reconstruction) at the encoder side. Necessary optimal conditions for encoder-decoder pair of this system are equivalent to those of the original system model shown in \figref{fig:diagram_VQ}.}
       \label{fig:decomposed VQ}
    \end{center}
\end{figure}

Let us first denote the MMSE estimator as the RV $\widetilde{\mathbf{X}}(\mathbf{Y}) \triangleq \mathbb{E}[\mathbf{X}|\mathbf{Y}]$, then we rewrite the end-to-end distortion $D$ as
\begin{equation} \label{eq:MSE COVQ}
\begin{aligned}
    D &=\mathbb{E}[\|\mathbf{X} -  \widetilde{\mathbf{X}}(\mathbf{Y}) + \widetilde{\mathbf{X}}(\mathbf{Y}) - \widehat{\mathbf{X}}\|_2^2]&\\
    &= \mathbb{E}[\|\mathbf{X} -  \widetilde{\mathbf{X}}(\mathbf{Y})\|_2^2] + \mathbb{E}[\|\widetilde{\mathbf{X}}(\mathbf{Y}) - \widehat{\mathbf{X}}\|_2^2],&
\end{aligned}
\end{equation}
where the second equality can be proved by showing that the estimation error of the source $\mathbf{X} - \widetilde{\mathbf{X}}(\mathbf{Y})$ and the quantized transmission error $\widetilde{\mathbf{X}}(\mathbf{Y}) - \widehat{\mathbf{X}}$ are uncorrelated. This holds from the definition of $\widetilde{\mathbf{X}}(\mathbf{Y})$ and the long Markov property $\mathbf{X} \rightarrow \mathbf{Y} \rightarrow I \rightarrow J \rightarrow \widehat{\mathbf{X}}$ due to the assumption of deterministic mappings $\textsf{E}$ and $\textsf{D}$ and memoryless channel.
\begin{rem}
    Following \eqref{eq:MSE COVQ}, let us denote by $D_{cs} \triangleq \mathbb{E}[\|\mathbf{X} -  \widetilde{\mathbf{X}}(\mathbf{Y})\|_2^2]$ the CS reconstruction distortion, and by $D_{q} \triangleq \mathbb{E}[\|\widetilde{\mathbf{X}}(\mathbf{Y}) - \widehat{\mathbf{X}} \|_2^2]$ the quantized transmission distortion. Then, the decomposition \eqref{eq:MSE COVQ}  indicates that the end-to-end source distortion $D$, without loss of optimality, is equivalent to $D = D_{cs} + D_q$.
\end{rem}

Interestingly, it can be also seen from \eqref{eq:MSE COVQ} that $D_{cs}$ does not depend on quantization and channel aspects. Hence, to find optimal encoding indexes (given fixed codevectors) and optimal codevectors (given fixed encoding regions) with respect to the end-to-end distortion $D$, it suffices to find them with respect to minimizing $D_q$. It can be proved that the necessary conditions for optimality (with respect to $D_q$) of the encoder-decoder pair derived for the system of \figref{fig:decomposed VQ} coincide with the ones developed for the system of \figref{fig:diagram_VQ}, i.e., \eqref{eq:final enc} and \eqref{eq:opt dec final}. The proof of this claim is as follows. Similar to the steps taken in \eqref{eq:rewrite MSE}, the $D_q$--minimizing encoding index $i^\star \in \mathcal{I}$ is given by
\begin{equation*} \label{eq:COVQCS_proof enc_equi}
\begin{aligned}
    i^\star &= \textrm{arg }\underset{i \in \mathcal{I}}{\textrm{min }} \mathbb{E} \left[\|\widetilde{\mathbf{X}}(\mathbf{Y}) - \mathbf{c}_J \|_2^2 | \mathbf{Y=y} , I=i\right]& \\
    &= \textrm{arg }\underset{i \in \mathcal{I}}{\textrm{min }} \left\{ \mathbb{E}[\|\mathbf{c}_J\|_2^2 \big | I=i] \!-\! 2\widetilde{\mathbf{x}}(\mathbf{y})^\top  \mathbb{E}[\mathbf{c}_J \big | I=i] \right\},& \\
    &=\textrm{arg }\underset{i \in \mathcal{I}}{\textrm{min}} \left\{ \sum_{j=0}^{\mathfrak{R}-1} P(j|i) \left\| \mathbf{c}_j \right\|_2^2 - 2 \widetilde{\mathbf{x}}(\mathbf{y})^\top \sum_{j=0}^{\mathfrak{R}-1} P(j|i) \mathbf{c}_j \right\}.&
\end{aligned}
\end{equation*}
Further, the $D_q$--minimizing decoder $\mathbf{c}_j^\star$ is obtained by
\begin{equation*} \label{eq:COVQCS_equi dec}
\begin{aligned}
    \mathbf{c}_j^\star &= \mathbb{E}[\widetilde{\mathbf{X}}^\star(\mathbf{Y}) | J = j]& \\
    &\stackrel{(a)}{=} \int \mathbb{E}[\mathbf{X}|J = j, \mathbf{Y=y}] p(\mathbf{y}|j) d\mathbf{y}&\\
    &= \mathbb{E} [\mathbf{X} | J = j],&
\end{aligned}
\end{equation*}
where $(a)$ follows from the Markov property $\widetilde{\mathbf{X}}(\mathbf{Y}) \rightarrow \mathbf{Y} \rightarrow J$. Now, we provide the following remark.
\begin{rem} \label{rem:equivalence}
    The general system of \figref{fig:diagram_VQ} and \figref{fig:decomposed VQ} are equivalent considering end-to-end MSE criterion, fixed sensing matrix and channel transition probabilities.
\end{rem}

Before proceeding to the analysis of the MSE using the developed equivalence property, we provide a comparative study between our proposed design scheme with related methods in the literature which follow the building block structure shown in \figref{fig:subopt}. Under this system model, for a fixed CS reconstruction algorithm (or, a fixed quantizer encoder-decoder pair), a quantizer encoder-decoder pair (or, CS reconstruction algorithm) is designed in order to satisfy a certain performance criterion, e.g. minimizing end-to-end distortion, quantization distortion or $\ell_1$--norm of reconstruction vector. Some examples of system models following \figref{fig:subopt} include \cite{09:Sun,11:Jacques,12:Kamilov,13:Pasha_journal} (assuming a noiseless channel) and the conventional \textit{nearest-neighbor coding} of CS measurements. In general, according to this system model, quantizer decoder $\textsf{D}$ outputs the vector $\widehat{\mathbf{Y}} \in \mathbb{R}^M$ after receiving channel output. Finally, a given CS reconstruction decoder $\textsf{R}:\mathbb{R}^M \!\rightarrow \!\mathbb{R}^N$ takes $\widehat{\mathbf{Y}}$ and makes an estimate of the sparse source.
\begin{figure}[!ht]
    \begin{center}
        \psfrag{x}[][][0.7]{$\mathbf{X}$}
        \psfrag{A}[][][0.75]{$\mathbf{\Phi}$}
        \psfrag{y}[][][0.7]{$\mathbf{Y}$}
        \psfrag{w}[][][0.7]{$\mathbf{W}$}
        \psfrag{Q}[][][0.7]{$\textsf{E}$}
        \psfrag{i}[][][0.7]{$U$}
        \psfrag{j}[][][0.7]{$V$}
        \psfrag{DMC}[][][0.7]{$P(v|u)$}
        \psfrag{Channel}[][][0.7]{Channel}
        \psfrag{quant}[][][0.7]{Quantizer}
        \psfrag{Enc}[][][0.7]{encoder}
        \psfrag{CS Enc}[][][0.7]{CS sensing}
        \psfrag{CS D}[][][0.7]{CS decoder}
        \psfrag{A-1}[][][0.7]{\textsf{R}}
        \psfrag{y_h}[][][0.7]{$\widehat{\mathbf{Y}}$}
        \psfrag{Dec}[][][0.7]{decoder}
        \psfrag{D}[][][0.7]{$\textsf{D}$}
        \psfrag{x_h}[][][0.7]{$\widehat{\mathbf{X}}$}
        \includegraphics[width=\columnwidth,height=1.8cm]{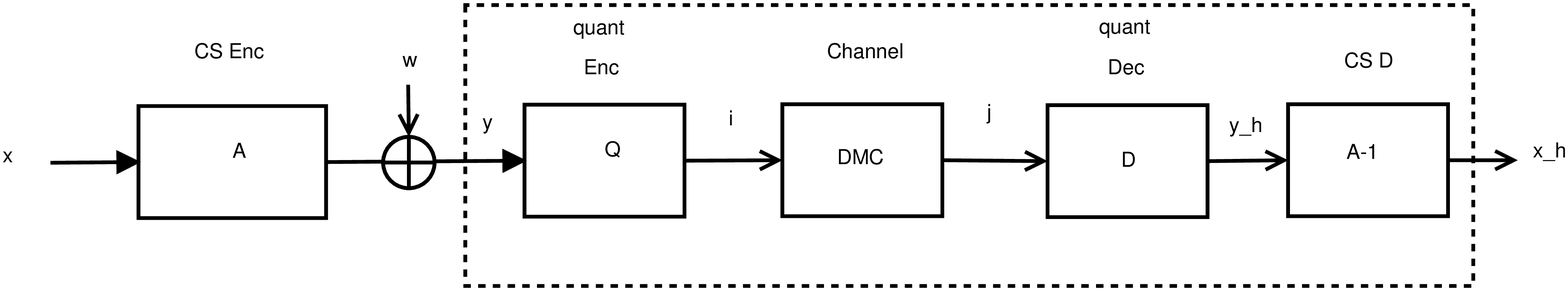}\\
        \caption{ Block diagram of a system with CS reconstruction at the decoder side. The aim is to design encoder--decoder mappings or CS decoder (illustrated in dashed box) with respect to optimizing a performance criterion while the CS sensing matrix $\mathbf{\Phi}$ and channel $P(v|u)$ are known in advance.}
       \label{fig:subopt}
    \end{center}
\end{figure}

Following \figref{fig:decomposed VQ} (as the equivalent system model of \figref{fig:diagram_VQ}), we note that it is structurally different from the system model of \figref{fig:subopt} in the location of the CS reconstruction, either at the transmitter side or at the receiver side. In the former system, an encoder reconstructs the source from CS measurements, whereas the latter system puts all CS reconstruction complexity at the decoder.

\subsection{Analysis of MSE} \label{subsec:bound COVQ-CS}
In this section, we provide an analysis into the impact of CS reconstruction distortion, quantization error and channel noise on the end-to-end MSE by deriving a lower-bound.

\begin{proposition} \label{theo2}
Consider the linear CS model \eqref{eq:measurement} with an exact $K$-sparse source $\mathbf{X} \in \mathbb{R}^N$ under the following assumptions:
\begin{enumerate}[i.]
    \item The magnitude of $K$ non-zero coefficients in $\mathbf{X}$ are drawn according to the i.i.d. standard Gaussian distribution.
    \item The $K$ elements of the support set are uniformly drawn from all ${N \choose K}$ possibilities.
    \item The measurement noise is drawn as $\mathbf{W} \sim \mathcal{N}(\mathbf{0}, \sigma_w^2 \mathbf{I}_M)$ uncorrelated with the measurements, where $\sigma_w^2 \neq 0$.
\end{enumerate}
Further, assume a sensing matrix $\mathbf{\Phi}$ with mutual coherence $\mu$. Let the total quantization rate be $R$ bits/vector, and the channel be characterized by capacity $C$ bits/channel use, then the end-to-end MSE of the system of \figref{fig:diagram_VQ} asymptotically (in quantization rate and dimension) is lower-bounded as
\begin{equation} \label{eq:lower-bound SSC}
\begin{aligned}
        D \geq Kc_1 + c_1 c_2  2^{-2C\left(\frac{R - \log_2 {N \choose K} }{K} \right)} ,
\end{aligned}
\end{equation}
where
    $c_1 \!=\! \frac{\sigma_w^2}{1 + \sigma_w^2 + (K+1)\mu}$,
and
    $c_2 \!=\! 2\left(\frac{K}{2} \Gamma\left(\frac{K}{2}\right) \right)^{\frac{2}{K}} \left(\frac{K+2}{K}\right)^{\frac{K}{2}}$, in which $\Gamma(\cdot)$ denotes the Gamma function.
\end{proposition}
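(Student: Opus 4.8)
\emph{Proof plan.} The natural first move is to invoke the decomposition recorded just after \eqref{eq:MSE COVQ}, namely $D = D_{cs} + D_q$ with $D_{cs} = \mathbb{E}[\|\mathbf{X} - \widetilde{\mathbf{X}}(\mathbf{Y})\|_2^2]$ and $D_q = \mathbb{E}[\|\widetilde{\mathbf{X}}(\mathbf{Y}) - \widehat{\mathbf{X}}\|_2^2]$, and then to lower-bound the two terms separately: I expect $D_{cs} \geq Kc_1$ to account for the first term of \eqref{eq:lower-bound SSC} and $D_q \geq c_1 c_2\, 2^{-2C(R-\log_2\binom{N}{K})/K}$ for the second. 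Since $D_{cs}$ involves neither the quantizer nor the channel, the two bounds can be developed independently and then simply added.

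For $D_{cs}$ I would use a genie argument: revealing the true support set $\mathcal{S}$ to the estimator cannot increase the MMSE, so $D_{cs} \geq \mathbb{E}_{\mathcal{S}}\big[\mathrm{Tr}\,\mathrm{Cov}(\mathbf{X}_{\mathcal{S}}\mid\mathbf{Y},\mathcal{S})\big]$. Conditioned on $\mathcal{S}$, the model \eqref{eq:measurement} becomes the linear-Gaussian observation $\mathbf{Y} = \mathbf{\Phi}_{\mathcal{S}}\mathbf{X}_{\mathcal{S}} + \mathbf{W}$ with $\mathbf{X}_{\mathcal{S}}\sim\mathcal{N}(\mathbf{0},\mathbf{I}_K)$ and $\mathbf{W}\sim\mathcal{N}(\mathbf{0},\sigma_w^2\mathbf{I}_M)$, whose posterior covariance is $(\mathbf{I}_K + \sigma_w^{-2}\mathbf{\Phi}_{\mathcal{S}}^\top\mathbf{\Phi}_{\mathcal{S}})^{-1}$. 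Writing the trace as $\sum_{k=1}^{K}\sigma_w^2/(\sigma_w^2 + \lambda_k)$ with $\lambda_k$ the eigenvalues of the sub-Gram matrix $\mathbf{\Phi}_{\mathcal{S}}^\top\mathbf{\Phi}_{\mathcal{S}}$, and bounding every $\lambda_k$ by $\lambda_{\max}(\mathbf{\Phi}_{\mathcal{S}}^\top\mathbf{\Phi}_{\mathcal{S}})$, the claim reduces to a Gershgorin-type estimate $\lambda_{\max}(\mathbf{\Phi}_{\mathcal{S}}^\top\mathbf{\Phi}_{\mathcal{S}}) \leq 1 + (K+1)\mu$ drawn from the definition of mutual coherence \eqref{eq:mutual co} (a mild over-estimate suffices here), which gives $D_{cs} \geq K\sigma_w^2/(1+\sigma_w^2+(K+1)\mu) = Kc_1$. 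This part is essentially routine linear algebra.

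The substantive part is the bound on $D_q$, which is the end-to-end distortion of a genuine joint source--channel coding problem: the effective source $\widetilde{\mathbf{X}}(\mathbf{Y})$ is $R$-bit quantized, passed through the DMC of capacity $C$, and decoded. I would combine a data-processing/capacity bound limiting the information conveyed about $\widetilde{\mathbf{X}}(\mathbf{Y})$ in terms of $R$ and $C$ with the $K$-dimensional Shannon lower bound on the distortion--rate function of $\widetilde{\mathbf{X}}(\mathbf{Y})$, i.e.\ $R_{\widetilde{X}}(D) \ge h(\widetilde{\mathbf{X}}(\mathbf{Y})) - \tfrac{K}{2}\log_2(2\pi e D/K)$. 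Two features of $\widetilde{\mathbf{X}}(\mathbf{Y})$ must be fed in: (i) conditionally on the support it is concentrated on one of $\binom{N}{K}$ coordinate $K$-subspaces, so that resolving it to small distortion forces the receiver to identify the support, costing $\log_2\binom{N}{K}$ bits of the budget and leaving $R - \log_2\binom{N}{K}$ bits for the $K$ magnitudes, of which the channel reliably conveys a fraction governed by $C$ --- producing the exponent $-2C(R-\log_2\binom{N}{K})/K$; and (ii) its per-coefficient conditional second moment is controlled by the same eigenvalue bound used above, which is where the prefactor $c_1$ re-enters. The residual constant $c_2$, including the $\Gamma(K/2)$ factor, is exactly what the Shannon lower bound returns for a $K$-dimensional source constrained only by its second moment (a $K$-ball volume computation). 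Assembling $D = D_{cs}+D_q$ then yields \eqref{eq:lower-bound SSC}.

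I expect the main obstacle to lie in making this third step rigorous in the ``asymptotic in rate and dimension'' regime: cleanly separating the discrete support-rate $\log_2\binom{N}{K}$ from the continuous magnitude-rate inside $R_{\widetilde{X}}(\cdot)$, verifying that the Shannon lower bound is asymptotically tight enough to recover the stated constant $c_2$ rather than a looser one, and confirming that the capacity $C$ enters the exponent multiplicatively as claimed under the paper's channel-use convention. By contrast, the $D_{cs}$ bound and the bookkeeping that recombines the two terms should be straightforward.
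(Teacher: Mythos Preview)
Your treatment of $D_{cs}$ is essentially identical to the paper's: oracle support, linear--Gaussian MMSE error covariance $(\mathbf{I}_K+\sigma_w^{-2}\mathbf{\Phi}_{\mathcal{S}}^\top\mathbf{\Phi}_{\mathcal{S}})^{-1}$, then the Gershgorin bound $\lambda_{\max}\le 1+(K+1)\mu$ to extract $Kc_1$. No issue there.

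The gap is in the order in which you invoke the oracle and the orthogonal decomposition. You decompose first via \eqref{eq:MSE COVQ} and then reveal $\mathcal{S}$ \emph{only} to the estimator when bounding $D_{cs}$; for $D_q$ you work with the true posterior mean $\widetilde{\mathbf{X}}(\mathbf{Y})=\mathbb{E}[\mathbf{X}\mid\mathbf{Y}]$. But this object is \emph{not} sparse: as a mixture over all $\binom{N}{K}$ supports it is generically a dense vector in $\mathbb{R}^N$, and conditioning on $\mathcal{S}$ does not change it (it is a deterministic function of $\mathbf{Y}$ alone). Hence your claim that ``conditionally on the support it is concentrated on one of $\binom{N}{K}$ coordinate $K$-subspaces'' fails, the split of the rate into $\log_2\binom{N}{K}$ support bits plus $R-\log_2\binom{N}{K}$ magnitude bits has no clean analogue, and the $K$-dimensional Shannon lower bound you want to apply does not attach to $\widetilde{\mathbf{X}}(\mathbf{Y})$ in any direct way. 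Your own caveat (``cleanly separating the discrete support-rate \ldots'') is precisely where the argument breaks.

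The paper avoids this by reversing the order: it first passes to a system in which $\mathcal{S}$ is known \emph{globally} (estimator, encoder and decoder), obtaining
\[
D \;\ge\; \mathbb{E}\big[\|\mathbf{X}|_{\mathcal{S}}-\widehat{\mathbf{X}}|_{\mathcal{S}}\|_2^2\big]
\;=\; \mathbb{E}\big[\|\mathbf{X}|_{\mathcal{S}}-\widetilde{\mathbf{X}}|_{\mathcal{S}}\|_2^2\big]
\;+\;\mathbb{E}\big[\|\widetilde{\mathbf{X}}|_{\mathcal{S}}-\widehat{\mathbf{X}}|_{\mathcal{S}}\|_2^2\big],
\]
and only then decomposes. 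Now the source entering the quantizer is the \emph{oracle} estimate $\widetilde{\mathbf{X}}|_{\mathcal{S}}=\mathbb{E}[\mathbf{X}\mid\mathbf{Y},\mathcal{S}]$, which \emph{is} supported on $\mathcal{S}$ and Gaussian with covariance $(\mathbf{I}_K+\sigma_w^{-2}\mathbf{\Phi}_{\mathcal{S}}^\top\mathbf{\Phi}_{\mathcal{S}})^{-1}$ on those $K$ coordinates. The quantized-transmission term is then lower-bounded by the distortion--rate function of this sparse-Gaussian source over a capacity-$C$ channel (the splitting into support bits plus Gaussian bits is asymptotically tight by the cited Weidmann result), yielding $c_2\,2^{-2C(R-\log_2\binom{N}{K})/K}\det(\mathrm{cov})^{1/K}$, and a second application of the same eigenvalue bound gives $\det(\mathrm{cov})^{1/K}\ge c_1$. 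Everything you anticipated about where $c_1$ re-enters and about the ball-volume origin of $c_2$ is correct; the missing idea is simply to hand the support to the \emph{entire} system before splitting $D$, so that the quantizer faces a genuine $K$-dimensional Gaussian rather than the intractable mixture $\mathbb{E}[\mathbf{X}\mid\mathbf{Y}]$.
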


\begin{proof}
	The proof can be found in the Appendix.
\end{proof}

\begin{rem} \label{rem:MSE floor}
    Each component of the lower-bound \eqref{eq:lower-bound SSC} is intuitive. The first term is the contribution of the CS reconstruction distortion, and the second term reflects the distortion due to the vector quantized transmission. When the CS measurements are noisy, it can be verified that as $R$ increases, the end-to-end MSE attains an error floor. This result can be also inferred from \eqref{eq:MSE COVQ}: as quantization rate increases, $D_q$ decays (asymptotically) exponentially, however, $D_{cs}$ is constant irrespective of rate. Hence, as $R \rightarrow \infty$, the value that the MSE converges to is $D_{cs} = \mathbb{E}[\|\mathbf{X} - \widetilde{\mathbf{X}}\|_2^2]$.
\end{rem}

It should be noted when CS measurements are noiseless ($\sigma_w^2 = 0$), the lower-bound \eqref{eq:lower-bound SSC} becomes trivial. In this case, a simple asymptotic lower-bound for the system of \figref{fig:diagram_VQ}, under the assumptions of \proref{theo2}, can be obtained as
\begin{equation} \label{eq:simple lb}
    \begin{aligned}
        D \geq c_2  2^{-2C\left(\frac{R - \log_2 {N \choose K} }{K} \right)} ,
    \end{aligned}
\end{equation}
where the constant $c_2$ is the same dimensionality-dependent constant in \eqref{eq:lower-bound SSC}.

The lower-bound \eqref{eq:simple lb} (also known as adaptive bound in \cite{06:Pai,08:Goyal} in the noiseless channel case) can be proved assuming that the support set of $\mathbf{X} \in \mathbb{R}^N$ is \textit{a priori} known. Therefore, one can transmit the known support set using $\log_2 {N \choose K}$ bits, and the Gaussian coefficients within the support set can be quantized via $R - \log_2{N \choose K}$ bits.  Under noiseless channel condition ($C=1$), the right hand side in \eqref{eq:simple lb} is shown to achieve the distortion rate function of a $K$-sparse source vector with Gaussian non-zero coefficients and a support set uniformly drawn from ${N \choose K}$ possibilities \cite{12:Weidmann}. Then, the separate source-channel coding theorem \cite[Chapter 7]{06:Cover} can be applied to find the optimum performance theoretically attainable (OPTA) by introducing channel capacity $C$.

\begin{rem}
The lower-bound in \eqref{eq:simple lb} shows that the end-to-end MSE can at most decay exponentially (in quantization rate $R$) with exponent $-\frac{6C}{K}$ dB/bit. Since the sparsity ratio $\frac{K}{N}<1$, the decaying exponent can be far steeper than $-\frac{6C}{N}$ dB/bit for a Gaussian non-sparse source vector of dimension $N$.
\end{rem}

The following toy example offers some insights into the tightness of the lower-bound \eqref{eq:simple lb}.

\begin{ex} \label{ex:lb}
    Using a simple example, we show how tight the lower-bound \eqref{eq:simple lb} is with respect to our proposed design. In \figref{fig:LB_rate}, we compare simulation results with the lower-bound in some region where $\widetilde{\mathbf{X}}(\mathbf{Y}) \rightarrow \mathbf{X}$.\footnote[1]{This scenario can be realized in an event where $\sigma_w^2 = 0$ and number of measurements is such that the CS reconstruction is perfect.} Following this best-case scenario, we generate $2 \times 10^5$ realizations of $\mathbf{X} \in \mathbb{R}^2$ with sparsity level $K=1$, where the non-zero coefficient is a standard Gaussian RV, and its location is drawn uniformly at random over $\{1,2\}$. Then, we use the necessary optimal conditions \eqref{eq:final enc} and \eqref{eq:opt dec} iteratively (as will be shown later in \algref{alg:Lloyd}). Considering a binary symmetric channel (BSC) with bit cross-over probability $\epsilon$ and capacity $C$ bits/channel use (see \eqref{eq:capacity BSC}), we plot MSE, $D=\mathbb{E}[\|\mathbf{X} - \widehat{\mathbf{X}}\|_2^2]$ versus quantization rate $R$ for $\epsilon = 0$ (noiseless channel) and $\epsilon = 0.02$ (noisy channel) in \figref{fig:LB_rate}. It can be observed that at $\epsilon=0$, the bound (dashed line) is tight. As would be expected, degrading channel condition to $\epsilon = 0.02$ reduces the performance. At $\epsilon=0.02$, the gap between the simulation result (solid line marked by `o') and its corresponding lower-bound (dotted line) increases. Note that in the noisy channel case, the lower-bound is based upon the asymptotic assumption of infinite source and channel code lengths (used in the OPTA). Therefore, the lower-bound is not tight at $\epsilon=0.02$ for low dimensions. 

\begin{figure}
  \begin{center}
  \includegraphics[width=\columnwidth,height=7.5cm]{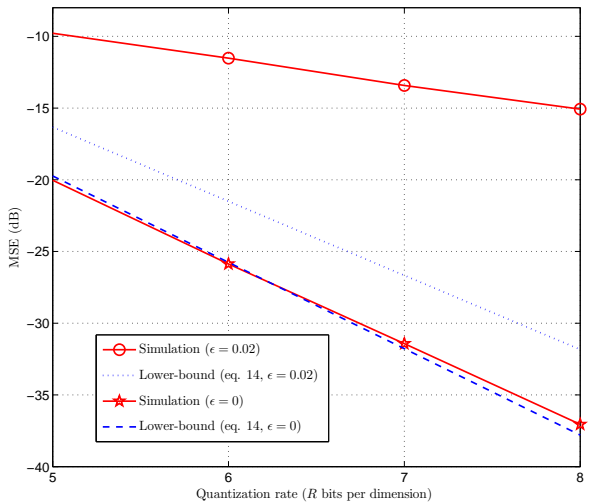}\\
  \caption{Comparison of the lower-bound in \eqref{eq:simple lb} and simulation results for simulation of a 1-sparse source $\mathbf{X} \in \mathbb{R}^2$ in a region where the locally reconstructed source $\widetilde{\mathbf{X}}$ can be perfectly recovered from noiseless measurements.}
  \label{fig:LB_rate}
  \end{center}
\end{figure}

\end{ex}

\section{Practical Quantizer Design} \label{subsec:train}
In this section, we first develop a practical VQ encoder-decoder design algorithm, referred to as channel-optimized VQ for CS (COVQ-CS) using the necessary optimal conditions \eqref{eq:final enc} and \eqref{eq:opt dec final}. Then, we provide a practical comparison between our proposed algorithm and a conventional quantizer design algorithm. We finalize this section by analyzing encoding and decoding computational complexity.

\subsection{Training Algorithm for Practical Design}

The results presented in \secref{subsec:opt enc} and \secref{subsec:opt dec} can be utilized to formulate an \textit{iterate-alternate} training algorithm for the problem of interest. Similar to the \textit{generalized Lloyd} algorithm for noisy channels \cite{90:Farvardin}, we propose a VQ training method for the design problem in this paper which is summarized in \algref{alg:Lloyd}. The following remarks can be considered for implementing \algref{alg:Lloyd}:
\begin{itemize}
    \item In step (1), besides the channel transition probabilities $P(j|i)$, we assume that the statistics of the sparse source vector are given for training. 
    \item In general, it is not easy to derive closed-form solutions for the optimal decoding condition \eqref{eq:opt dec final}, for example, due to difficulties in calculating the integrals even if the pdf $f(\mathbf{y})$ is known. In practice, we calculate the codevector $\mathbf{c}_j$ ($j \in \mathcal{I}$) in \eqref{eq:opt dec} using the Monte-Carlo method. To implement this computationally-efficient procedure, we first generate a set of finite \textit{training vectors} $\mathbf{X}$, and then sample-average over those vectors that have led to the index $J=j$.
    \item To address the issue of encountering empty regions, we, in each iteration of the algorithm, pick the codevector whose index has been sent the most number of times, denoted by
        $\mathbf{c}_j^{\max}$. Then, a codevector associated with the index that has not been sent is calculated as $\mathbf{c}_j^{\max} + \delta \mathbf{c}_j^{\max}$, where $\delta>0$ is sufficiently small. Using this technique (which is also known as splitting method in the initialization phase of the LBG algorithm \cite{80:LBG}), we efficiently re-include those encoding indexes that have never been selected due to the limited number of generated samples. This will lead to a design that efficiently uses all degrees of freedom.

    \item The performance of the COVQ-CS is sensitive to initializations in order for the algorithm to converge to a smaller value of the distortion $D$. Therefore, in step (3), when the channel is noiseless, the codevectors are initialized using the splitting procedure of the so-called LBG design algorithm. Then, the final optimized codevectors are chosen for initialization of \algref{alg:Lloyd} in the noisy channel case. Furthermore, convergence in step (7) may be checked by tracking the MSE, and terminate the iterations when the relative improvement is small enough. By construction and ignoring issues such as numerical precision, the iterative design in \algref{alg:Lloyd} always converges to a local optimum since when the criteria in steps (5) and (6) of the algorithm are invoked, the performance can only leave unchanged or improved, given the updated indexes and codevectors. This is a common rationale behind the proof of convergence for such iterative algorithms (see e.g. \cite[Lemma 11.3.1]{91:Gersho}). However, nothing can be generally guaranteed about the global optimality of this algorithm.
\end{itemize}

\begin{algorithm}
\caption{ COVQ-CS: Practical training algorithm}\label{alg:Lloyd}
\begin{algorithmic}[1]
\STATE{\textbf{input:} measurement vector: $\mathbf{y}$, channel probabilities: $P(j|i)$, bit budget: $R$ bits/vector.}
\STATE{\textbf{compute:} $\widetilde{\mathbf{x}}(\mathbf{y})$ in \eqref{eq:MMSE sparse}.}
\STATE{\textbf{initialize: } $\mathcal{C} = \{\mathbf{c}_j\}_{j=0}^{\mathfrak{R}-1}$, where $\mathfrak{R}=2^R$}
\REPEAT
    \STATE{Fix the codevectors, then update the encoding indexes using \eqref{eq:final enc}.}
    \STATE{Fix the encoding indexes, then update the codevectors using \eqref{eq:opt dec final}.}
\UNTIL{convergence}
\STATE{\textbf{output: } $\{\mathcal{R}_i\}_{i=0}^{\mathfrak{R}-1}$ , $\mathcal{C}=\{\mathbf{c}_j\}_{j=0}^{\mathfrak{R}-1}$ }
\end{algorithmic}
\end{algorithm}

\subsection{Practical Comparison} \label{subsec:pract comp}
\begin{figure*}[ht]
 \centering
 \subfigure[Encoding regions using the proposed COVQ-CS \eqref{eq:regions} in $\mathbb{R}^2$]{
  \includegraphics[width=2.3in]{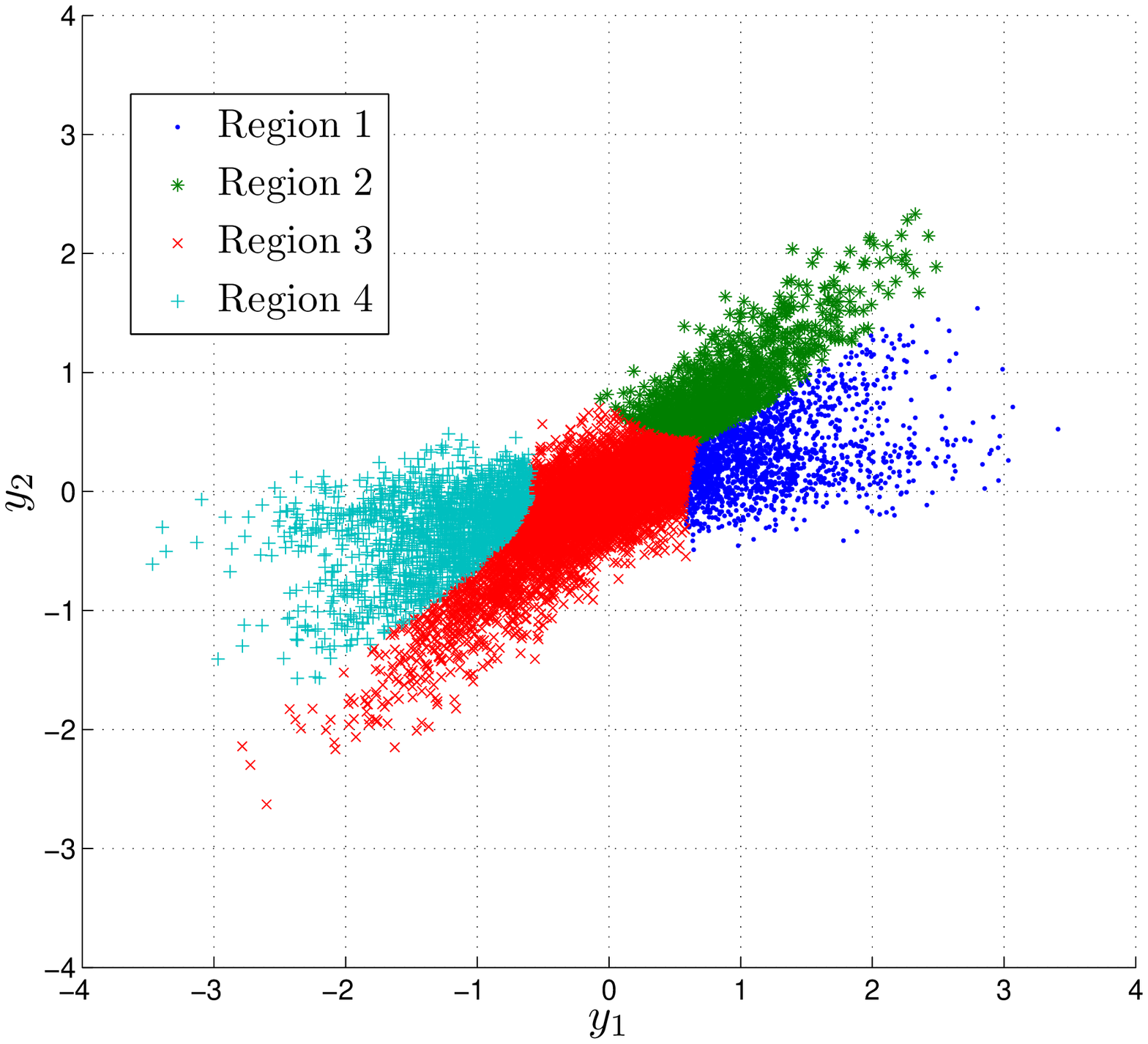}
  \label{fig:region_opt_2}
  }
 \subfigure[Codevectors designed by COVQ-CS in $\mathbb{R}^3$]{
  \includegraphics[width=2.3in]{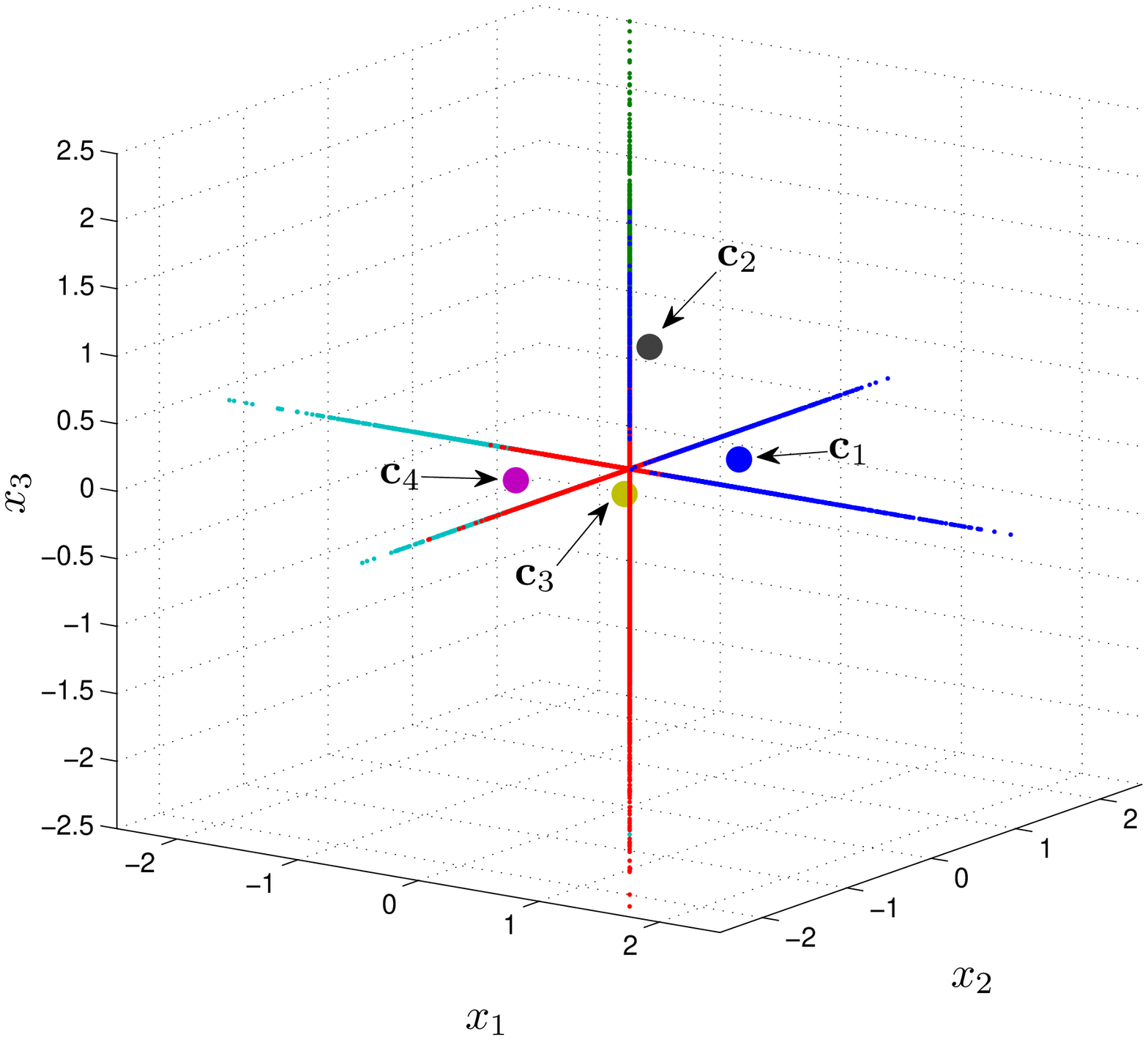}
  \label{fig:region_opt_3} \vspace{-0.4cm}
  }
  \subfigure[Encoding regions using the NNC-CS \eqref{eq:CO region ex} in $\mathbb{R}^2$]{
  \includegraphics[width=2.3in]{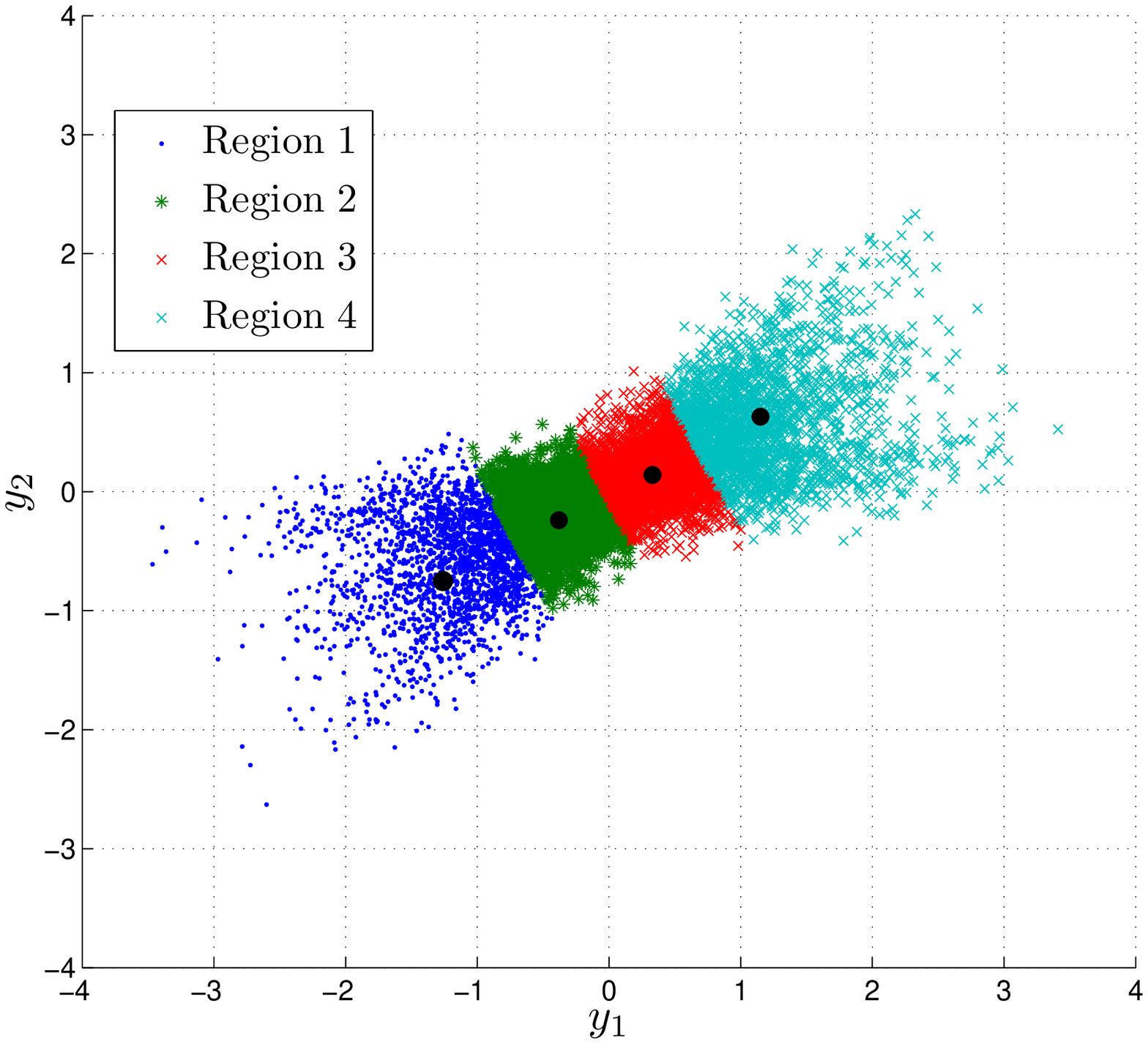}
  \label{fig:region_subopt_2}
  }
  \subfigure[Reconstructed Codevectors designed by NNC-CS in $\mathbb{R}^3$]{
  \includegraphics[width=2.3in]{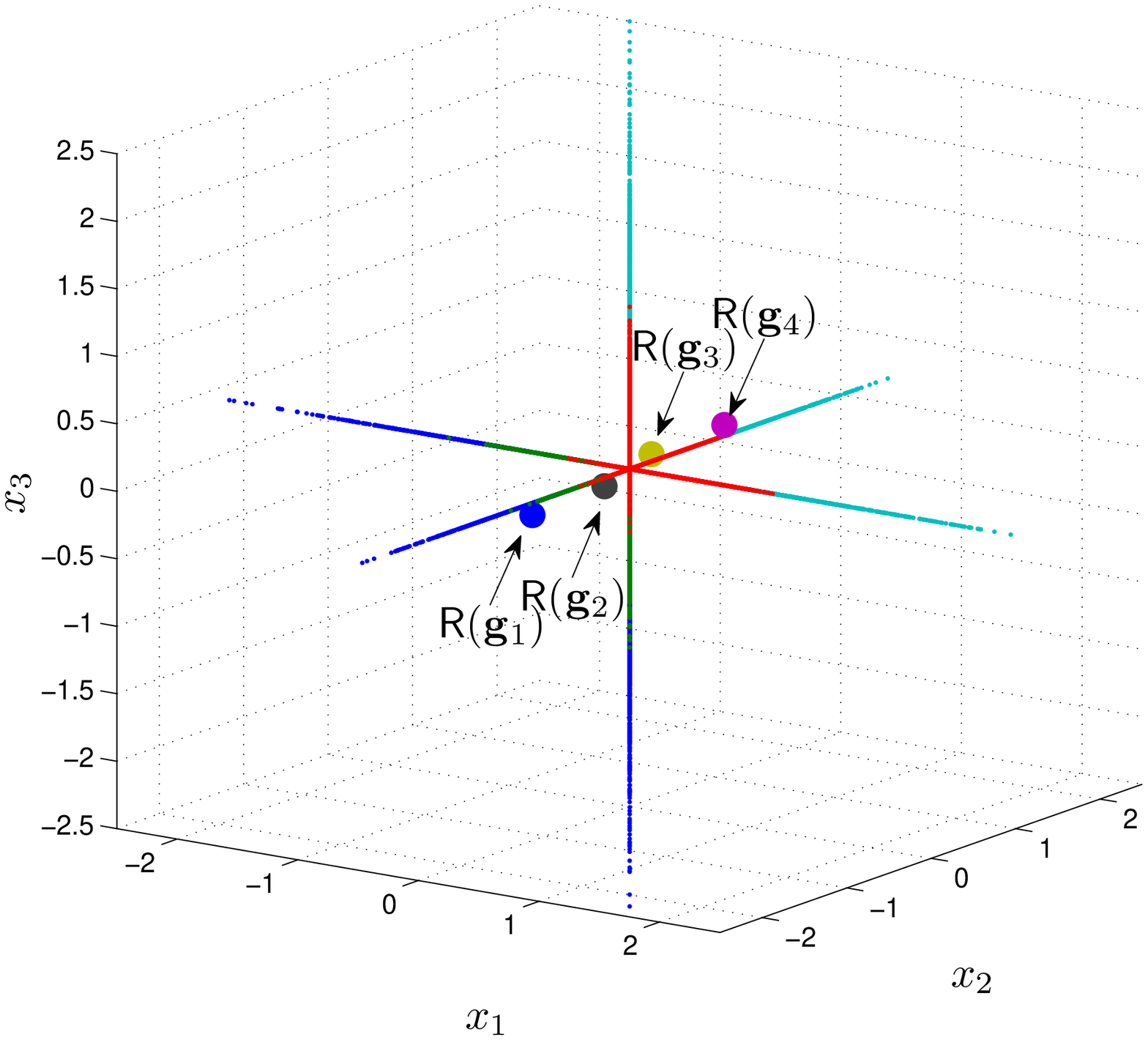}
  \label{fig:region_subopt_3} 
  }
 \caption{The qualitative behavior comparison of quantizer schemes: COVQ-CS and NNC-CS designed for a BSC with $\epsilon = 0.02$.}
  \label{fig:regions}
\end{figure*}

Here, we offer further insights into quantization aspects through the design of conventional \textit{nearest-neighbor coding} (NNC) as a representative of \figref{fig:subopt}, and the design of proposed COVQ-CS method as a representative of \figref{fig:decomposed VQ}. The NNC for CS is often considered as a benchmark for performance evaluations. 

The nearest-neighbor coding (NNC) for CS measurements is accomplished by designing a channel-optimized VQ for the input vector $\mathbf{Y}$ aiming to minimize the \textit{quantization distortion}, i.e., $\mathbb{E}[\|\mathbf{Y} \!-\! \widehat{\mathbf{Y}}\|_2^2]$, where $\widehat{\mathbf{Y}} \! \in \! \mathbb{R}^M$ is the quantizer decoder output as shown in \figref{fig:subopt}.\footnote[1]{See e.g. \cite{90:Farvardin} for more details regarding the design of channel-optimized VQ in a non-CS system model.} Considering the notations given for the \figref{fig:subopt}, the design procedure of the quantizer encoder and the quantizer decoder is as follows: for a quantization rate $R$ bits/vector, a fixed codebook $\mathcal{G} \!=\! \{\mathbf{g}_v \! \in \! \mathbb{R}^M\}_{v=0}^{\mathfrak{R}\!-\!1}$, with $\mathfrak{R}\!=\!2^R$, and channel transition probability $P(v|u)$, the optimized encoding region $\mathcal{R}_{u}^\star$ becomes
    \begin{equation} \label{eq:CO region ex}
    \begin{aligned}
        \mathcal{R}_{u}^\star  =  &\left\{\mathbf{y} \in \mathbb{R}^M  : \sum_{v=0}^{\mathfrak{R}-1} \left[P(v|u) - P(v|u')\right] \left\| \mathbf{g}_v \right\|_2^2 \leq \right.& \\
        & \left. 2 \mathbf{y}^\top \sum_{v=0}^{\mathfrak{R}-1} \left[P(v|u) - P(v|u') \right] \mathbf{g}_v , u \neq u' \in \mathcal{U} \right\},&
    \end{aligned}
    \end{equation}
where $\mathcal{U} \triangleq \{0,\ldots,2^R-1\}$ is the encoding index set.
    Now, for the given region \eqref{eq:CO region ex} and channel transition probability $P (j|i)$, the quantization MSE-minimizing codevectors satisfy
    \begin{equation} \label{eq:CO cent ex}
            \mathbf{g}_{v}^\star = \mathbb{E}[\mathbf{Y} | V=v]  , \hspace{0.5cm}  v \in \mathcal{U}.
    \end{equation}
    In order to design an encode-decoder pair using the NNC, an iterative algorithm can be used to alternate between \eqref{eq:CO region ex} and \eqref{eq:CO cent ex}. Finally, a CS reconstruction algorithm $\textsf{R}$ produces the reconstruction vector $\widehat{\mathbf{X}}$ from the quantizer decoder output $\widehat{\mathbf{Y}}$. We refer to this design method as NNC-CS.

\begin{ex} \label{ex:NNC}
    In this example, we illustrate how the COVQ-CS and NNC-CS design methods are different in shaping encoding regions (given that CS measurements are observed) and positioning codevectors (given that channel output index observed). For illustration purpose, we choose the input sparse vector dimension, measurement vector dimension and sparsity level as $N=3$, $M=2$ and $K=1$, respectively. The location of non-zero coefficient is drawn uniformly at random from $\{1,\ldots,N\}$, and its value is a standard Gaussian RV. For implementing the COVQ-CS via \algref{alg:Lloyd}, the MMSE estimator $\widetilde{\mathbf{x}}(\mathbf{y})$ (used in \eqref{eq:final enc}) is calculated via the closed-form solution given in \cite[eq. (27)]{09:Elad}. We generate $10^4$ realizations for $\mathbf{X}$ (and subsequently $\mathbf{Y}$), where measurement noise vector is drawn from $\mathcal{N}(\mathbf{0},\sigma_w^2 \mathbf{I}_M)$ with $\sigma_w^2 = 0.04$. Then, we fix the quantization rate at $R=2$ bits/vector and assume a BSC with cross-over probability $\epsilon = 0.02$. For implementing the NNC-CS, an iterative algorithm is used by alternating between encoding regions \eqref{eq:CO region ex} and codevectors \eqref{eq:CO cent ex}. Finally, a CS reconstruction algorithm $\textsf{R}: \mathbb{R}^M \rightarrow \mathbb{R}^N$ (here, we choose the same MMSE estimator used at the encoder of COVQ-CS) takes the NNC-CS codevectors and produces an estimate of the sparse source. In both NNC-CS and COVQ-CS schemes, the sensing matrix $\mathbf{\Phi}$ is chosen as
    \[ \mathbf{\Phi} = \left( \begin{array}{ccc}
        0.9924 & 0.8961 & 0.7201 \\
        0.1230 & 0.4439 & 0.6939 \\
    \end{array} \right).\]

    In \figref{fig:regions}, we qualitatively illustrate encoding regions and codevectors using the two designs. \figref{fig:region_opt_2} shows the samples of CS measurements classified by encoding regions of COVQ-CS in $\mathbb{R}^2$, i.e., \eqref{eq:final enc}, and \figref{fig:region_opt_3} shows the samples of $\mathbf{X}$ classified by the index of encoding regions (in the same color) together with the codevectors of COVQ-CS in $\mathbb{R}^3$, i.e., $\{\mathbf{c}_j\}_{j=1}^4$ in \eqref{eq:opt dec final}. \figref{fig:region_subopt_2} illustrates the encoding regions of NNC-CS, i.e., \eqref{eq:CO region ex}, together with codevectors $\{\mathbf{g}_v\}_{v=1}^4$ shown by black circles, and \figref{fig:region_subopt_3} shows the samples of the sparse source along with the codevectors of NNC-CS mapped to the 3-dimensional space using the CS reconstruction algorithm, i.e., $\textsf{R}(\{\mathbf{g}_v\}_{v=1}^4)$. From the samples in the measurement space, we observe that the entries of the CS measurements are highly correlated, in this particular example, due to a large mutual coherence of the sensing matrix ($\mu = 0.9533$). Hence, as shown in \figref{fig:region_subopt_2}, the codevectors designed by the NNC-CS (almost) lie on a single line. Although, in this case, the location of codevectors are optimized to minimize the quantization distortion, $\mathbb{E}[\|\mathbf{Y} - \widehat{\mathbf{Y}}\|_2^2]$, it is critical when the codevectors are mapped back to the source domain. From \figref{fig:region_subopt_3}, it is observed that the reconstructed codevectors, $\textsf{R}(\{\mathbf{g}_v\}_{v=1}^4)$, are not only situated (approximately) on one axis but also far (in Euclidean distance) to their corresponding source samples (shown in same color) resulting in a high end-to-end distortion. Further, if, for example, the codevector $\mathbf{g}_1$ is received as $\mathbf{g}_4$ due to channel noise, it produces a large end-to-end distortion. Using other experiments, in the case of noiseless channel, we observed the same trend in the location of reconstructed codevectors (using NNC-CS) on the source domain which also produces large MSE in terms of the average distance between source samples and their corresponding reconstructed codevectors. While this is the case in NNC-CS, it can be seen from \figref{fig:region_opt_2} that the encoding regions using COVQ-CS may not form convex sets (for example, region 3) unlike the ones using the NNC-CS. This is due to the fact that the region fixed by the rule \eqref{eq:regions} may not be a convex set in $\mathbf{y}$ due to non-linearity in $\widetilde{\mathbf{x}}(\mathbf{y})$. As a result, the COVQ-CS uses the measurement space more efficiently in order to reduce end-to-end distortion, $\mathbb{E}[\|\mathbf{X} - \widehat{\mathbf{X}}\|_2^2]$. It can be observed from \figref{fig:region_opt_3} that the COVQ-CS codevectors are located on different coordinates in the 3-dimensional source space to minimize the end-to-end source distortion. In addition, the codevectors are located such that the COVQ-CS design becomes more robust against channel noise which produces smaller end-to-end distortion unlike the NNC-CS design. For example, as shown in \figref{fig:region_opt_3}, if the codevector $\mathbf{c}_1$ is chosen as $\mathbf{c}_4$ at decoder due to channel noise, it provides much less end-to-end distortion than that of the NNC-CS. Numerical performance comparison between these two schemes will be made later in \secref{subsec:results} through different simulation studies.

\end{ex}
\subsection{Complexity of COVQ-CS} \label{subsec:VQ complexity}
We analyze the encoding computational complexity (time usage) as well as encoder-decoder memory complexity (space usage) for the COVQ-CS. For encoding computational complexity, we calculate the number of operations (in terms of FLOP\footnote[1]{Each addition, multiplication and comparison is represented by one floating point operation (FLOP).}) required for transmitting an encoded index over the channel based on \eqref{eq:final enc}. In addition, for memory complexity, we calculate the memory (in terms of float\footnote[2]{Float is considered as a single precision point unit.}) required for storing vector parameters at encoder and decoder.

The encoding complexity for computing the argument in \eqref{eq:final enc} requires one FLOP for calculating the subtraction as well as $2N\!-\!1$ FLOP's ($N$ multiplications and $N\!-\!1$ additions) for calculating the inner product in the second term. Thus, the total complexity for the full-search minimization at encoder is $2N 2^R$ FLOP's. Note that we do not consider the complexity of CS reconstruction algorithm since its calculation is required for all relevant quantizers for CS. Next, considering the argument in \eqref{eq:final enc}, the encoder needs one float to store the first constant term in \eqref{eq:final enc}, i.e., $\|\mathbf{c}_j\|_2^2$, and also $N$ floats to store the second term in \eqref{eq:final enc}, i.e., the codevector $\mathbf{c}_j$. Thus, the total encoding memory for full-search minimization is $(N\!+\!1)2^R$. It also follows that the decoder requires $N2^R$ floats to store $\mathbf{c}_j$ in \eqref{eq:opt dec}.

Using high-dimensional VQ and CS, the implementation of the quantizer encoder and decoder may not be feasible, both from computational complexity and from memory complexity viewpoints. The complexity can be reduced by exploiting sub-optimal approaches (with respect to \eqref{eq:e2e dist}) such as multi-stage VQ (MSVQ) which splits a single VQ into multiple VQ's at different stages. In the next section, we focus on the design of JSCC strategies for CS measurements using MSVQ.

\section{Joint Source-Channel MSVQ for CS} \label{sec:COMSVQ}
Taking advantage of VQ properties by addressing its encoding complexity effectively has led to development of multi-stage VQ (MSVQ). 

\subsection{System Description and Performance Criterion} \label{subsec:system model}
In this section, we give an account for the basic assumptions and models made about the investigated system depicted in \figref{fig:MSVQ}. We illustrate an $L$-stage VQ, where $L \geq 1$ is the maximum number of stages. Our MSVQ system model basically follows that of \cite{93:Phamdo}.
\begin{figure}
  \begin{center}
  \psfrag{x}[][][0.6]{$\mathbf{X}$}
  \psfrag{A}[][][0.85]{$\mathbf{\Phi}$}
  \psfrag{y}[][][0.6]{$\mathbf{Y}$}
  \psfrag{w}[][][0.6]{$\mathbf{W}$}
  \psfrag{Q_1}[][][0.65]{$\textsf{E}_1$}
  \psfrag{Q_2}[][][0.65]{$\textsf{E}_2$}
  \psfrag{Q_R}[][][0.65]{$\textsf{E}_L$}
  \psfrag{D_1}[][][0.65]{$\textsf{D}_1$}
  \psfrag{D_2}[][][0.65]{$\textsf{D}_2$}
  \psfrag{D_R}[][][0.65]{$\textsf{D}_L$}
  \psfrag{Pji}[][][0.55]{$P(j_1|i_1)$}
  \psfrag{Pji2}[][][0.55]{$P(j_2|i_2)$}
  \psfrag{PjiR}[][][0.55]{$P(j_L|i_L)$}
  \psfrag{C_1}[][][0.65]{$\mathcal{C}_1$}
  \psfrag{C_2}[][][0.65]{$\mathcal{C}_2$}
  \psfrag{C_R}[][][0.65]{$\mathcal{C}_L$}
  \psfrag{i_1}[][][0.6]{$I_1$}
  \psfrag{i_2}[][][0.6]{$I_2$}
  \psfrag{i_R}[][][0.6]{$I_L$}
  \psfrag{i_R-1}[][][0.6]{$I_{L-1}$}
  \psfrag{j_1}[][][0.6]{$J_1$}
  \psfrag{j_2}[][][0.6]{$J_2$}
  \psfrag{j_R}[][][0.6]{$J_L$}
  \psfrag{DMC}[][][0.65]{Channel}
  \psfrag{Dec}[][][0.65]{Decoder}
  \psfrag{x_h_1}[][][0.6]{$\widehat{\mathbf{X}}_1$}
  \psfrag{x_h_2}[][][0.6]{$\widehat{\mathbf{X}}_2$}
  \psfrag{x_h_R}[][][0.6]{$\widehat{\mathbf{X}}_L$}
  \psfrag{x_h}[][][0.6]{$\widehat{\mathbf{X}}$}
  \includegraphics[width=\columnwidth,height=4cm]{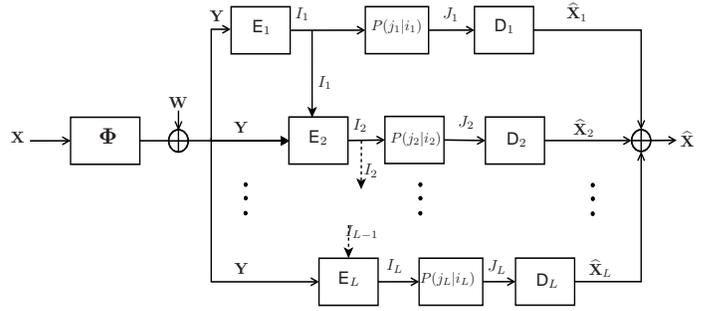}\\
  \caption{JSCC system model for CS measurements using MSVQ.}\label{fig:MSVQ}
  \end{center}
\end{figure}
More specifically, we consider the $l^{th}$ ($1 \leq l \leq L$) stage with allocated $R_l$ bits/vector, where $\sum_{l=1}^L R_l = R$, and $R$ is the total available quantization rate. Indeed, $R_l$ adjusts a trade-off between complexity and performance of MSVQ. A quantizer encoder, at stage $l$, accepts the measurement vector $\mathbf{Y}$ and the encoded index from the $(l-1)^{th}$ stage as inputs, then maps them into an integer index $i_l \in \mathcal{I}_l \triangleq \{0,\ldots,2^{R_l}-1\}$ with $|\mathcal{I}_l| \triangleq \mathfrak{R}_l = 2^{R_l}$. Therefore, the $l^{th}$--stage encoder is described by a mapping $\textsf{E}_l: \mathbb{R}^M \times \mathcal{I}_{l-1} \rightarrow \mathcal{I}_l$ such that
\begin{equation} \label{eq:MSVQ enc2}
   \textsf{E}_l\left(\mathbf{Y},I_{l-1}\right) = i_l, \hspace{0.4cm} \textrm{if } (\mathbf{Y} \in \mathcal{R}_{i_1}^{i_l}, \hspace{0.1cm} I_{l-1}=i_{l-1}),
\end{equation}
where $\mathcal{R}_{i_1}^{i_l} \triangleq \mathcal{R}_{i_{1}} \cap \ldots \cap \mathcal{R}_{i_l}$ is called the $l^{th}$--stage encoding region. The region $\mathcal{R}_{i_1}^{i_l}$ might be a connected set or union of some connected sets in $\mathbb{R}^M$. We also make the assumption that $\mathcal{I}_0 = \varnothing$.

The encoded index $I_l$ is transmitted over a DMC (independent of other channels) with transition probabilities
\begin{equation} \label{eq:DMC2}
    P(j_l|i_l) = \textrm{Pr} (J_l = j_l | I_l = i_l) , \hspace{0.25cm} i_l,j_l \in \mathcal{I}_l,
\end{equation}
where $J_l$ denotes the channel output at the $l^{th}$ stage.

Next, a decoder $\textsf{D}_l$ accepts the noisy index $J_l$, and provides an estimation of the quantization error according to an available codebook set. Formally, the $l^{th}$--stage decoder is defined by a mapping $\textsf{D}_l: \mathcal{I}_l \rightarrow \mathcal{C}_l$ where $\mathcal{C}_l$ denotes a codebook set consists of reproduction codevectors, i.e., $\mathcal{C}_l \triangleq \{\mathbf{c}_{j_l} \in \mathbb{R}^N\}_{j_l=0}^{\mathfrak{R}_l-1}$, thus
\begin{equation} \label{eq:MSVQ dec1}
   \textsf{D}_l(j_l) = \mathbf{c}_{j_l}, \hspace{0.4cm} \textrm{if } J_l=j_l , \hspace{0.2cm} j_l \in \mathcal{I}_l.
\end{equation}
We denote the output of the $l^{th}$ stage decoder by $\widehat{\mathbf{X}}_l = \mathbf{c}_{J_l}$, and the final reconstructed vector by $\widehat{\mathbf{X}} = \sum_{l=1}^L \widehat{\mathbf{X}}_l$.

We are interested in designing the quantizers in the system of \figref{fig:MSVQ} using the \textit{end-to-end MSE} criterion defined in \eqref{eq:e2e dist}. Nevertheless, it is not easy to find optimal encoders (by fixing the decoders) and decoders (by fixing encoders) for all the stages jointly with respect to minimizing \eqref{eq:e2e dist}. Therefore, we define a new performance criterion as
\begin{equation} \label{eq:MSE MSVQ}
    D_l \triangleq \mathbb{E}[\|\mathbf{X} - \sum_{t=1}^l \widehat{\mathbf{X}}_t \|_2^2] , \hspace{0.15cm} l=1,\ldots,L .
\end{equation}

Using the performance criterion $D_l$ in \eqref{eq:MSE MSVQ}, we assume that the $l^{th}$ stage only observes the previous $(l-1)$ stages. Applying $D_l$, we derive necessary encoding and decoding policies for optimality (with respect to \eqref{eq:MSE MSVQ}) at stage $l$ ($1 \leq l \leq L$). Then, encoder-decoder pairs at the next stages are \textit{sequentially} designed one after another. Using the sequential optimization at stage $l$, we assume that the subsequent codevectors are populated with zero. This assumption means that the sequential design is sub-optimal with respect to \eqref{eq:e2e dist}, and the resulting conditions would lead to neither global nor local minimum of the end-to-end MSE. However, it provides better performance compared to the schemes which only consider quantization distortion at each stage separately.

\subsection{Optimality Conditions for MSVQ Encoder and Decoder} \label{sec:MSVQ}
In this section, we develop encoding and decoding principles for the $l^{th}$ ($1 \leq l \leq L$) stage of the MSVQ system shown \figref{fig:MSVQ}. Following the arguments of \secref{sec:design COVQ}, we first assume that decoder codevectors $\{\mathbf{c}_{j_l}\}_{j_l=0}^{\mathfrak{R}_l-1}$ and all encoding regions/codevectors at previous $l-1$ stages are fixed and known, then we find necessary optimal encoding regions with respect to minimizing $D_l$ in \eqref{eq:MSE MSVQ} in \secref{subsec:encoder MSVQ}. Second, we fix the encoding regions $\{\mathcal{R}_{i_1}^{i_l}\}$, and then derive necessary optimal codevectors in \secref{subsec:decoder MSVQ}. Finally, in \secref{subsec:training MSVQ}, we combine these necessary optimal conditions to develop a practical MSVQ design algorithm referred to as channel-optimized MSVQ for CS (COMSVQ-CS).

\subsubsection{Optimal Encoder} \label{subsec:encoder MSVQ}
In order to derive encoding regions $\{\mathcal{R}_{i_1}^{i_l}\}$, we fix the codevectors $\{\mathbf{c}_{j_l}\}_{j_l=0}^{\mathfrak{R}_l-1}$ and all the codevectors at previous stages. First, let us define
\begin{equation} \label{eq:helping param}
    D_l(\mathbf{y}, \mathbf{i}_1^{l}) \triangleq \mathbb{E} [\|\mathbf{X} - \sum_{t=1}^l \widehat{\mathbf{X}}_t \|_2^2 \big| \mathbf{Y=y} , \mathbf{I}_1^l = \mathbf{i}_1^l ], \hspace{0.15cm} 1 \leq l \leq L.
\end{equation}

Now, $D_l$ in \eqref{eq:MSE MSVQ} can be rewritten as
\begin{equation} \label{eq:rewrite MSE_MSVQ}
\begin{aligned}
    D_l &\triangleq \mathbb{E}[\|\mathbf{X} -  \sum_{t=1}^l \mathbf{c}_{J_t} \|_2^2]&\\
    &\stackrel{(a)}{=} \sum_{i_1,\!\ldots\!,i_l} \int_{\mathcal{R}_{i_1}^{i_l}}  D_l(\mathbf{y}, \mathbf{i}_1^{l})  f(\mathbf{y}) d\mathbf{y} ,&
\end{aligned}
\end{equation}
where $(a)$ follows from marginalization of $D_l$ over $\mathbf{Y}$ and $I$, and the fact that $\textrm{Pr} \{\mathbf{I}_1^l=\mathbf{i}_1^l | \mathbf{Y=y}\} = 1$, $\forall \mathbf{y} \in \mathcal{R}_{i_1}^{i_l}$ and otherwise the probability is zero.

Thus, $\forall i_1 ,\ldots, i_{l-1}$, the optimized index, denoted by $i_l^\star$, is attained by \eqref{eq:proof prim enc}, where $(a)$ follows from the Markov chain $\mathbf{X} \rightarrow (\mathbf{Y},\mathbf{I}_1^{t-1}) \rightarrow I_t$ ($\forall t \in \{1,\ldots,l\}$), hence, $\mathbb{E} \left[\|\mathbf{X}\|_2^2 | \mathbf{Y\!=\!y}, \mathbf{I}_1^l\!=\!\mathbf{i}_1^l \right] = \mathbb{E} \left[\|\mathbf{X}\|_2^2 | \mathbf{Y\!=\!y}\right]$ which is pulled out of the optimization. Also, $(b)$ follows from the Markov chain $(\mathbf{Y}, \mathbf{I}_1^l) \rightarrow  I_t \rightarrow  \mathbf{c}_{J_t}$, $\forall t \in \{1,\ldots,l\}$.

\begin{figure*}[!t]
\normalsize
\setcounter{MYtempeqncnt}{\value{equation}}
\begin{equation} \label{eq:proof prim enc}
\begin{aligned}
    i_l^\star &= \textrm{arg }\underset{i_l \in \mathcal{I}_l}{\textrm{min }} D_l(\mathbf{y}, \mathbf{i}_1^{l})& \\
    &= \textrm{arg }\underset{i_l \in \mathcal{I}_l}{\textrm{min }} \left\{ \mathbb{E} \left[\|\mathbf{X}\|_2^2 | \mathbf{Y=y}, \mathbf{I}_1^l=\mathbf{i}_1^l \right]  + \mathbb{E}\left[\|\mathbf{c}_{J_1}\!+\!\ldots\!+\! \mathbf{c}_{J_l}\|_2^2 | \mathbf{Y\!=\!y} , \mathbf{I}_1^l \!=\! \mathbf{i}_1^l \right] - 2\mathbb{E}\left[\mathbf{X}^\top (\mathbf{c}_{J_1}\!+\!\ldots\!+\! \mathbf{c}_{J_l}) | \mathbf{Y\!=\!y}, \mathbf{I}_1^l\!=\!\mathbf{i}_1^l\right] \right\} & \\
    &\stackrel{(a)}{=} \textrm{arg }\underset{i_l \in \mathcal{I}_l}{\textrm{min }} \left\{ \mathbb{E}[\|\mathbf{c}_{J_1}\!+\!\ldots\!+\! \mathbf{c}_{J_l}\|_2^2  | \mathbf{Y=y}, \mathbf{I}_1^l=\mathbf{i}_1^l] - 2\mathbb{E}[\mathbf{X}^\top (\mathbf{c}_{J_1}\!+\!\ldots\!+\! \mathbf{c}_{J_l})  | \mathbf{Y=y},\mathbf{I}_1^l=\mathbf{i}_1^l ] \right\}& \\
    &\stackrel{(b)}{=} \textrm{arg }\underset{i_l \in \mathcal{I}_l}{\textrm{min }} \left\{ \mathbb{E}[\|\mathbf{c}_{J_l}\|_2^2 \big | I_l=i_l] + 2 \sum_{t=1}^{l-1} \mathbb{E}[\mathbf{c}_{J_l}^\top | I_l=i_l] \mathbb{E}[\mathbf{c}_{J_t} | I_t=i_t] - 2\mathbb{E}[\mathbf{X}^\top \big | \mathbf{Y=y}] \mathbb{E}[\mathbf{c}_{J_l} \big | I_l=i_l] \right\}&
\end{aligned}
\end{equation}
\setcounter{equation}{\value{MYtempeqncnt}}
\hrulefill
\end{figure*}
\setcounter{equation}{23}

Introducing transition probabilities \eqref{eq:DMC2} and the MMSE estimator \eqref{eq:MMSE sparse}, the last equality in \eqref{eq:proof prim enc} is expressed as
\begin{equation} \label{eq:final enc MSVQ}
\begin{aligned}
    i_l^\star &\!=\! \textrm{arg }\underset{i_l \in \mathcal{I}_l}{\textrm{min}} \left\{\sum_{j_l=0}^{\mathfrak{R}_l-1} P(j_l|i_l) \left\| \mathbf{c}_{j_l} \right\|_2^2 \!-\! 2 \widetilde{\mathbf{x}}(\mathbf{y})^\top \sum_{j_l=0}^{\mathfrak{R}_l-1} P(j_l|i_l) \mathbf{c}_{j_l} \right.& \\
    &\left.+ 2 \sum_{j_l=0}^{\mathfrak{R}_l-1} \sum_{t=1}^{l-1}  \sum_{j_t=0}^{\mathfrak{R}_t-1} P(j_l|i_l) P(j_t|i_t) \mathbf{c}_{j_l}^\top \mathbf{c}_{j_t} \right\} .&
\end{aligned}
\end{equation}

\begin{rem}
    Comparing the optimized encoding index for MSVQ for CS in \eqref{eq:final enc MSVQ}, with that of the VQ for CS in \eqref{eq:final enc}, it can be seen that the third term in \eqref{eq:final enc MSVQ} is due to imposing multi-stage structure on the original VQ. As $L=1$, this term vanishes and the resulting expression coincides with \eqref{eq:final enc}.
\end{rem}

\subsubsection{Optimal Decoder} \label{subsec:decoder MSVQ}
In order to derive codevectors $\{\mathbf{c}_{j_l}\}_{j_l=0}^{\mathfrak{R}_l-1}$, we fix encoding regions $\{\mathcal{R}_{i_1}^{i_l}\}$ and all prior codebook sets. Therefore, applying $D_l$ in \eqref{eq:MSE MSVQ}, it is straightforward to show that the optimal $l$--stage codevectors, denoted by $\{\mathbf{c}_{j_l}^\star\}_{j_l=1}^{\mathfrak{R}_l-1}$, are obtained as
\begin{equation} \label{eq:codevector MSVQ}
\begin{aligned}
    \mathbf{c}_{j_l}^\star     &= \mathbb{E}[\mathbf{X} - \sum _{t=1}^{l-1} \mathbf{c}_{J_t} | J_l = j_l] , \hspace{0.4cm} j_l \in \mathcal{I}_l.&
\end{aligned}
\end{equation}
Similar to the steps taken in \eqref{eq:opt dec final}, the codevectors \eqref{eq:codevector MSVQ} can be parameterized in terms of encoding regions, channel transition probabilities and MMSE estimation. Here, for the sake of analysis, we only provide closed-form codebook expressions for $L=2$ which are given by
\begin{equation*} \label{eq:MSVQ codevector param}
\begin{aligned}
    &\mathbf{c}_{j_1}^\star \!=\! \frac{\sum_{i_1} P(j_1|i_1) \int_{\mathcal{R}_{i_1}} \widetilde{\mathbf{x}}(\mathbf{y}) f(\mathbf{y}) d\mathbf{y} }{\sum_{i_1} P(j_1|i_1) \int_{\mathcal{R}_{i_1}} f(\mathbf{y}) d\mathbf{y}},& \\
    &\mathbf{c}_{j_2}^\star \!=\! \frac{\sum_{i_1,i_2} P(j_2|i_2) \int_{\mathcal{R}_{i_1}^{i_2}} \left(\widetilde{\mathbf{x}}(\mathbf{y}) \!-\! \sum_{j_1} P(j_1|i_1) \mathbf{c}_{j_1} \right) f(\mathbf{y}) d\mathbf{y}}{\sum_{i_1,i_2}  P(j_2|i_2) \int_{\mathcal{R}_{i_1}^{i_2}} f(\mathbf{y}) d\mathbf{y}}.&
\end{aligned}
\end{equation*}
Finally, we note that when $L\!=\!1$, the condition \eqref{eq:codevector MSVQ} simplifies into \eqref{eq:opt dec final}.
\subsubsection{Training Algorithm} \label{subsec:training MSVQ}

Similar to \algref{alg:Lloyd}, we can develop a practical method for training channel-optimized MSVQ for CS, coined COMSVQ-CS, summarized in \algref{alg:Lloyd MSVQ}. Similar remarks, as stated for \algref{alg:Lloyd}, can be also considered for implementing \algref{alg:Lloyd MSVQ} with the difference that convergence in step (8) may be checked by tracking the distortion $D_l$, and terminate the iterations when the relative improvement is small enough. Furthermore, in order to calculate the codevector $\mathbf{c}_{j_l}$ ($j_l \in \mathcal{I}_l$) in \eqref{eq:codevector MSVQ}, we use Monte-Carlo method by first generating a set of finite \textit{training vectors} $\mathbf{X}$, with known pdf, and then calculating the vector $\mathbf{X} - \sum _{t=1}^{l-1} \mathbf{c}_{J_t}$. Finally, we average over those vectors that have resulted the index $J_l=j_l$.

\begin{algorithm}
\caption{ COMSVQ-CS: Practical training algorithm for the $l^{th}$ stage ($1 \leq l \leq L$) of MSVQ}\label{alg:Lloyd MSVQ}
\begin{algorithmic}[1]
\STATE{\textbf{input:} measurement vector $\mathbf{y}$, channel probabilities: $P(j_l|i_l)$ from \eqref{eq:DMC2}, bit budget: $R_l$}
\STATE{\textbf{compute:} $\widetilde{\mathbf{x}}(\mathbf{y})$ in \eqref{eq:MMSE sparse}}
\STATE{\textbf{initialize: } $\mathcal{C}_l = \{\mathbf{c}_{j_l}\}_{j_l=0}^{\mathfrak{R}_l-1}$ with $\mathfrak{R}_l = 2^{R_l}$}
\REPEAT
    \STATE{$\forall i_1,\ldots, i_{l-1}$, $\forall \mathbf{c}_{j_1},\ldots, \mathbf{c}_{j_{l-1}}$}
    \STATE{Fix the codebooks of all prior stages, then update encoding indexes (regions) for the $l^{th}$ stage using \eqref{eq:final enc MSVQ}.}
    \STATE{Fix the encoding indexes (regions) of all prior stages, then update the codevectors for the $l^{th}$ stage using \eqref{eq:codevector MSVQ}.}
\UNTIL{convergence}
\STATE{\textbf{output: } $\mathcal{C}_l = \{\mathbf{c}_{j_l}\}_{j_l=0}^{\mathfrak{R}_l-1}$ and $\{\mathcal{R}_{i_1}^{i_l}\}$ }
\end{algorithmic}
\end{algorithm}

\subsection{Complexity of COMSVQ-CS} \label{cs:complexity COMSVQ-CS}
In order to calculate the MSVQ encoder complexity, we calculate the number of operations at the encoder based on \eqref{eq:final enc MSVQ}. Here, the computational complexity of CS reconstruction algorithm is not considered. We consider the argument of \eqref{eq:final enc MSVQ} which requires two FLOP's for the subtraction and addition, and also $2N-1$ FLOP's for computing the second inner product term. Note that the first constant term and the third inner product term can be computed offline, and they are not counted in our complexity analysis. Thus, in total, the COMSVQ-CS encoder requires $(2N + 1) \sum_{l=1}^L 2^{R_l}$ operations, where $R_l$ ($l=1,\ldots,L$) is the quantization rate available at a $l^{th}$ stage and $L$ is total number of stages such that $\sum_{l=1}^L R_l = R$. 

It can be also shown that at stage $l$, the encoder requires one float to store the first term in \eqref{eq:final enc MSVQ}, i.e., $\|\mathbf{c}_{j_l}\|_2^2$, $N$ floats to store the second term in \eqref{eq:final enc MSVQ}, i.e., $\mathbf{c}_{j_l}$, and also $l-1$ floats for storing the third term in \eqref{eq:final enc MSVQ}. Therefore, considering $L$ stages, the total encoding memory of the COMSVQ-CS is $\sum_{l=1}^L (N+\textit{l})2^{R_l}$. Now, we consider the decoder memory complexity. Each decoder at stage $l$ requires $N2^{R_l}$ floats to store the codevector $\mathbf{c}_{j_l}$ considering the fact that the memory for storing the codebooks of previous stages has been already calculated. Hence, the decoder storage memory is $N \sum_{l=1}^L 2^{R_l}$ floats.

By splitting the original VQ into stages, the computational complexity as well as memory complexity can be considerably reduced. Therefore, a practical, however sub-optimal, implementation of COVQ-CS is feasible at high quantization rate and dimension.

\section{Experiments and Results} \label{sec:numerical}
In this section, we evaluate the performance of the proposed designs COVQ-CS (\algref{alg:Lloyd}) and COMSVQ-CS (\algref{alg:Lloyd MSVQ}). Through simulations, we compare their performances with the lower-bounds developed in \secref{subsec:bound COVQ-CS} along with existing quantizers used for CS. We consider three quantizers following the system model of \figref{fig:subopt}. They are as follows.
\begin{itemize}
    \item \textit{Nearest-Neighbor Coding for CS (NNC-CS)}:\footnote[1]{Here, with abuse of notation we use the term \textit{nearest-neighbor coding} in the presence of channel noise instead of \textit{weighted nearest-neighbor coding}.} The NNC-CS design method has been discussed in \secref{subsec:pract comp}. Note that this scheme has the same complexity order as that of the COVQ-CS.

    \item \textit{Multi-Stage Nearest-Neighbor Coding for CS (MSNNC-CS)}: Using multi-stage structure for NNC-CS leads to the design of MSNNC-CS. The quantizer encoding-decoding conditions using this design are given in \cite{93:Phamdo} for a non-CS system model. The encoding complexity order of MSNNC-CS is the same as that of the COMSVQ-CS.
    \item \textit{Basis Pursuit DeQuantizing (BPDQ)} \cite{11:Jacques}: Using this method, the encoder uniformly scalar-quantizes CS measurements, and the BPDQ  algorithm \cite{11:Jacques} reconstructs the sparse source (from the quantized measurements) by solving the following convex optimization problem
        \begin{equation} \label{eq:bpdq}
            \mathbf{x}^\star = \textrm{arg }\underset{\mathbf{x} \in \mathbb{R}^N}{\textrm{min }} \|\mathbf{x}\|_1 \text{    s.t.  } \|\widehat{\mathbf{y}}  - \mathbf{\Phi x} \|_p \leq \gamma,
        \end{equation}
        where $\widehat{\mathbf{y}}$ is the quantized vector, $p > 2$ and $\gamma > 0$ is chosen to satisfy some fidelity constraint, e.g., quantization error power. Note that the encoder computational complexity is of order $\mathcal{O}(2^{R/M})$. In the design of uniform scalar quantizer for the BPDQ scheme, the choice of lower- and upper-boundaries for quantization is important, leading to different saturation errors \cite{11:Laska}. In order to choose the end-points for uniform quantization of CS measurements, we generate random samples of CS measurement vectors according to the distribution of the sparse source, sensing matrix, and the measurements noise. Then, the upper quantization boundary is selected as the maximum value among the amplitudes of the generated sample entries of the measurement vector. The lower quantization boundary is also selected as the opposite value of the upper-boundary. Using such simple approach, we mainly reduce the effect of the saturation error.
\end{itemize}
The following scheme follows the system model of \figref{fig:decomposed VQ}.
\begin{itemize}
    \item \textit{Support Set Coding (SSC)}: In the SSC method, the reconstructed support set of $\widetilde{\mathbf{x}}(\mathbf{y})$ is transmitted using $\log_2 {N \choose K}$ bits, and then the $K$ largest coefficients (in magnitude) within the reconstructed support set are scalar-quantized to their nearest neighbor codepoints using $R - \log_2 {N \choose K}$ bits. Here, we use codepoints optimized for a standard Gaussian RV using the LBG algorithm \cite{80:LBG}. Notice that when the non-zero coefficients of the sparse source vector are drawn according to an i.i.d. standard Gaussian distribution (which is the case in our simulations), the optimized LBG-based codepoints  minimize the distortion per non-zero component of the sparse source. It is straightforward that the encoding complexity of the SSC is of order $\mathcal{O}(2^{\left(R-\log_2 {N \choose K}\right)/K})$, or equivalently $\mathcal{O}(2^{R/K})$ at high quantization rate. \footnote[1]{In the spirit of reproducible results, we provide MATLAB codes for simulation of the AbS-based quantizers in the following website: www.ee.kth.se/$\sim$amishi/reproducible$\_$research.html.}
 \end{itemize}

\subsection{Experimental Setup} \label{subsec:setup}
We quantify the performance using normalized MSE (NMSE) defined as
\begin{equation} \label{eq:NMSE}
    \textrm{NMSE} \triangleq \frac{\mathbb{E}[\|\mathbf{X}-\widehat{\mathbf{X}}\|_2^2]}{\mathbb{E}[\|\mathbf{X}\|_2^2]}.
\end{equation}
In principle, the numerator of NMSE in \eqref{eq:NMSE} is computed by sample averaging over generated realizations of $\mathbf{X}$ using Monte-Carlo simulations, and the denominator can be calculated exactly under the assumptions of our simulation setup.

In addition, in order to measure the level of under-sampling, we define the measurement rate $0 < \alpha \leq 1$ as $\alpha \triangleq M/N$.

Our simulation setup includes the following steps:
\begin{enumerate}
  \item For given values of sparsity level $K$ (assumed known in advance) and input vector size $N$, choose $\alpha$, and round the number of measurements $M$ to its nearest integer.
  \item Randomly generate a set of exactly $K$-sparse vector $\mathbf{X}$, where the support set $\mathcal{S}$ with $|\mathcal{S}| = K$ is chosen uniformly at random over the set $\{1,2,\ldots,N\}$. The non-zero coefficients of $\mathbf{X}$ are i.i.d. and drawn from standard Gaussian source $\mathcal{N}(0,1)$; Hence $\mathbb{E}[\|\mathbf{X}\|_2^2] = K$.
 
\item We let the elements of the sensing matrix be $\mathbf{\Phi}_{ij}\! \overset{\textrm{iid}}{\sim} \!\mathcal{N} (0,1/M)$, and normalize its columns to unit-norm. Once $\mathbf{\Phi}$ is generated, it remains fixed and known globally.
  \item Compute linear measurements $\mathbf{Y= \Phi X + W}$ for each sparse data vector where $\mathbf{W} \sim \mathcal{N}(\mathbf{0},\sigma_w^2 \mathbf{I}_M)$.
    
     \item We Choose the total quantization rate $R$, and assume a BSC with bit cross-over probability $0 \leq \epsilon \leq 0.5$ specified by
      \begin{equation} \label{eq:BSC}
          P(k | l) = \epsilon^{H_\mathfrak{R}(k,l)} (1 - \epsilon)^{\mathfrak{R} - H_\mathfrak{R}(k,l)}, \hspace{0.1cm} \mathfrak{R} = 2^R,
      \end{equation}
      where $0 \leq \epsilon \leq 1/2$ represents bit cross-over probability (assumed known), and $H_\mathfrak{R}(k,l)$ denotes the Hamming distance between $\mathfrak{R}$-bit binary codewords representing the channel input and output indexes $k$ and $l$. The capacity of BSC (in bits per channel use) with bit cross-over probability $\epsilon$ is equivalent to
      \begin{equation} \label{eq:capacity BSC}
        C = 1 + \epsilon \log_2(\epsilon) + (1-\epsilon)\log_2(1-\epsilon).
      \end{equation}

  \item Apply the quantization algorithms on the generated data $\mathbf{Y}$, and assess NMSE by averaging over all data.
  \item \textit{Practical necessity:}  In our proposed COVQ-CS and COMSVQ-CS design algorithms, it is required to calculate the MMSE estimator $\widetilde{\mathbf{x}}(\mathbf{y})$, e.g. in \eqref{eq:final enc} and \eqref{eq:final enc MSVQ}. Implementing the Bayesian MMSE estimator, or in other words, calculating the conditional mean $\mathbb{E}[\mathbf{X}| \mathbf{Y=y}]$, has been studied in \cite{07:Larsson,08:Ji,09:Elad,10:Protter,12:Kun} which can be derived approximately or exactly under certain assumptions. Although the MMSE estimator can be implemented for low-dimensional vectors (as used in \exampref{ex:NNC}), as the dimension grows, its complexity increases exponentially. Thus, for the sake of complexity, we will approximate $\widetilde{\mathbf{x}}(\mathbf{y})$ using the output of a practically realizable CS reconstruction algorithm. Considering the case that a $\ell_1$-norm minimization-based convex reconstruction also suffers from high complexity $\mathcal{O}(N^3)$ for a high dimension $N$, we choose the simple orthogonal matching pursuit (OMP) greedy algorithm \cite{07:Tropp} as a CS reconstruction where its computational complexity is $\mathcal{O}(K^3 \!+\! K^2 M \!+\! KMN)$. We show that using the OMP algorithm, we can obtain reasonable performance. The OMP is used as the approximation of the MMSE estimator (at the encoder side) for COVQ-CS, COMSVQ-CS and SSC schemes as well as the realization of CS reconstruction algorithm (at the decoder side) for NNC-CS and MSNNC-CS methods.
\end{enumerate}

\subsection{Experimental Results} \label{subsec:results}

In our simulations, we generate $10^6$ realizations of the input sparse vector $\mathbf{X}$ (correspondingly $\mathbf{Y}$) for the training algorithms as well as performance assessments using Monte-Carlo simulations. We evaluate the performance of the competing schemes in terms of number of CS measurements ($\alpha$), total quantization rate ($R$) and channel condition $(\epsilon)$. It should be also mentioned that the training algorithms are performed at each value on x-axis, i.e., $\alpha$, $R$ and $\epsilon$.

In our first experiment, we assume that the measurement noise and channel cross-over probability are negligible, i.e., $\sigma_w^2 = 0$ and $\epsilon = 0$. \footnote[2]{With abuse of notation, we still use the term COVQ-CS and COMSVQ-CS when channel is noiseless ($\epsilon=0$).} In \figref{fig:MSE_FOM_10}, with the simulation setup ($N=12,K=2,R=12 \text{ bits/vector}$), we vary measurement rate $\alpha = M/N$, and compare the performance (NMSE) of the quantizers along with the lower-bound \eqref{eq:simple lb}. We use 2-stage VQ with equal quantization rates. For implementing the BPDQ decoder, we select $p=3$ in \eqref{eq:bpdq} (the choice of $p$ is experimentally verified to achieve the best performance), and $\gamma$ is chosen according to \cite[eq. (7)]{11:Jacques}, then a standard convex solver is used to find the optimal solution of \eqref{eq:bpdq}. Let us first investigate the behavior of the full search COVQ-CS and NNC-CS quantizer design schemes in \figref{fig:MSE_FOM_10}. At a fixed quantization rate $R$, increasing the number of measurements improves the CS reconstruction performance, hence the end-to-end MSE decreases. Since quantized transmission distortion $D_q$ is fixed, NMSE would saturate ultimately. As expected, the proposed COVQ-CS design method gives the best performance, and at high measurement rates, it approaches the lower-bound \eqref{eq:simple lb}. This is due to the fact that, at high measurement rate regime, the CS distortion $D_{cs}$ becomes negligible and the source vector can be precisely recovered from the measurements; therefore, COVQ-CS approaches the distortion rate function for the sparse source. Note that the performance gain using the COVQ-CS design scheme is obtained at the expense of computational and memory complexity of order $\mathcal{O}(2^{R})$. Using the sub-optimal COMSVQ-CS scheme, the complexity is decreased to the order $\mathcal{O}(2^{R/2})$ although its performance is slightly declined compared to COVQ-CS. Among multi-stage structured methods, the COMSVQ-CS performs better than MSNNC-CS since it takes end-to-end MSE through its design procedure. Also, it can closely follow the behavior of the COVQ-CS at low to moderate ranges of measurement rates since at this regime, the performance is mostly influenced by the CS reconstruction distortion. However, as the measurement rate increases, the gap between the performance of the COMSVQ-CS and COVQ-CS becomes larger. The gap can be made smaller if we use higher quantization rates at the first stage while keeping the total quantization rate fixed, however, this imposes more encoding complexity to the system. It can be also seen that the SSC scheme performs poorer than COVQ-CS and MSVQ-CS, while its encoding complexity grows at most like $O(2^{R/2})$. The behavior of the BPDQ, however, is different: increasing number of measurements, on one hand, facilitates a more precise reconstruction. On the other hand, it reduces quantization rate since each measurement entry is quantized using $R/M$ bits. Hence, the performance curve of BPDQ reaches a minimum point, and then takes an upward trend which also complies with the fact of CS and quantization compression regimes \cite{12:Laska}. Note that the BPDQ has the least computational complexity among the competing techniques varying from the order of $\mathcal{O}(2^{R/3})$ to $\mathcal{O}(2^{R/12})$.

\begin{figure}
  \begin{center}
  \includegraphics[width=\columnwidth,height=7.5cm]{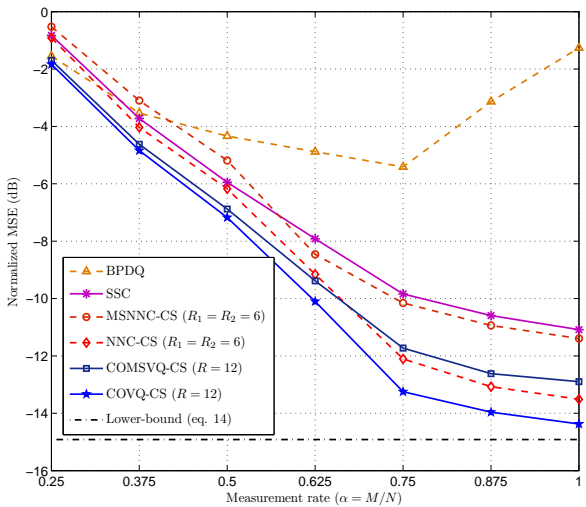}\\
  \caption{NMSE (in dB) as a function of measurement rate $\alpha = M/N$ using different quantizer design schemes. The parameters are chosen as $N=12$, $K=2$ and $R=12$ bits/vector for a noiseless channel and clean measurements.}
  \label{fig:MSE_FOM_10}
  \end{center}
\end{figure}

In our next experiment, we use larger dimension and quantization rate as ($N=32, K=3, M=20$ $(\alpha=0.625)$, $\sigma_w^2 = 0.005$) and noiseless channel. In \figref{fig:MSE_rate}, we plot the NMSE of low-complexity quantizer designs, i.e., COMSVQ-CS, MSNNC-CS, SSC and BPDQ, along with the lower-bound \eqref{eq:lower-bound SSC} by varying total quantization rate $R$. The mutual coherence $\mu$ in the lower-bound \eqref{eq:lower-bound SSC} is computed by \eqref{eq:mutual co} (here, $\mu = 0.5755$). For implementing multi-stage quantizers, we assume two stages with $R_1 = R_2$. Also, BPDQ parameters are the same as those of the previous simulation study. At low to moderate quantization rate regimes, the COMSVQ-CS outperforms other techniques; for example, at $R=20$ bits/vector, it has almost $2$ dB performance gain over MSNNC-CS and SSC, and $8$ dB performance gain over BPDQ. The performance of SSC differs much at low to high quantization rates: although its performance is very poor at low to moderate rates, the performance reaches that of the COMSVQ-CS at high rates since the SSC requires high rates to perform well. Note that, at low quantization rates, the performance of SSC is poor due to the reason that its design is based on scalar quantization of reconstructed source vector at the encoder. Whereas, the COMSVQ-CS and MSNNC-CS schemes provide better performance since, in their designs, reconstructed source vector or  CS measurement vector are vector-quantized. Naturally, this performance gain is achieved at the expense of higher encoding complexity. Note that all schemes attain a MSE floor ultimately due to the additive noise which is reflected from the lower-bound as well. In particular, at very high quantization rates, the performance of SSC approaches to that of the COMSVQ-CS, and finally converges to the CS reconstruction MSE, denoted by $D_{cs}$, which is also aligned with our findings in  \remref{rem:MSE floor}. Our calculations show that the MSE floor, i.e., the value of $D_{cs}$, is approximately $-16.5$ dB. 

\begin{figure}
  \begin{center}
  \includegraphics[width=\columnwidth,height=7.5cm]{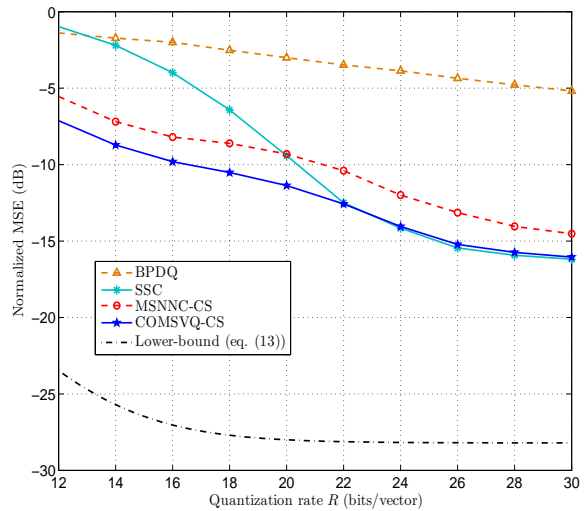}\\
  \caption{NMSE (in dB) as a function of total quantization rate $R$ (in bits/vector). Simulation parameters are chosen as $N=32$, $K=3$, $M=20$ for noiseless channel and $\sigma_w^2 = 0.005$.}
  \label{fig:MSE_rate}
  \end{center}
\end{figure}

In our final experiments, we consider the effect of channel noise on the performance of the proposed JSCC schemes, and we also compare them with separate source-channel coding schemes. In \figref{fig:MSE_eps1} and \figref{fig:MSE_eps}, we quantify the performance as a function of channel bit cross-over probability $\epsilon$, respectively, for two parameter sets: ($N \!=\! 12$, $K\!=\!2$, $M \!=\!9$ ($\alpha\!=\!0.75$), $R\!=\!15$ bits/vector, $\sigma_w^2\!=\!0$) and ($N \!= \! 32$, $K\!=\!3$, $M \!=\!20$ ($\alpha\!=\!0.625$), $R\!=\!20$ bits/vector, $\sigma_w^2\!=\!0$). In \figref{fig:MSE_eps1}, we observe that the proposed designs, i.e., COVQ-CS and COMSVQ-CS (with $R_1\!=\!8$ and $R_2\!=\!7$ bits/vector) always outperform other schemes. The curves labeled by `SSC-BCH' and `BPDQ-BCH', respectively, consist of twelve-dimensional 11-bit encoded bits using SSC and uniform quantization, followed by ($15,11$) BCH codes (this rate is experimentally tested to obtain the best performance among BCH rate allocations.). Note that channel coding rates are chosen in order to have a fair comparison (in terms of same delay) among JSCC schemes (COVQ-CS and COMSVQ-CS) and separate source-channel coding methods (SSC-BCH and BPDQ-BCH). We observe from \figref{fig:MSE_eps1} that using separate channel coding, the performance of the BPDQ is still poor. It can be also seen that the SSC, even equipped with channel coding, is highly susceptible to channel noise since an error in receiving the support set may detrimentally degrade the performance. Therefore, the MSE increases more rapidly as compared to COVQ-CS, COMSVQ-CS and BPDQ-BCH. We have also tested the performance of SSC with (15,11) Hamming codes which provides almost the same performance as that of the SSC-BCH. Using joint source-channel codes in the proposed designs enhances the performance and provide robustness, particularly at high channel noise. For example, in \figref{fig:MSE_eps1}, the performance gain of COVQ-CS over the SSC is almost $4$ dB, when the channel is highly noisy ($\epsilon \!=\! 0.05$). While the COMSVQ-CS and SSC have (almost) the same encoding complexity order, the performance gain of COMSVQ-CS over SSC-BCH is more than $2$ dB at $\epsilon \!= \!0.05$. It should be mentioned that the gap between the COVQ-CS and the lower-bound is due to CS reconstruction distortion (low number of measurements) as well as finite length of source-channel codes.

Since the SSC is quite sensitive to error in received support set, in \figref{fig:MSE_eps}, we also show the performance of SSC when the reconstructed support set is transmitted without loss, and the non-zero coefficients are encoded using ($7,4$) BCH codes. This scheme is marked by `SSC-coded' in \figref{fig:MSE_eps}, and, indeed, is an \textit{ideal} coding scheme since the support set may not be transmitted losslessly over a noisy channel, in practice. The proposed COMSVQ-CS design method not only outperforms the ideal separate source-channel coding scheme (SSC-coded), but also the other JSCC scheme (MSNNC-CS). It can be also seen as channel condition degrades, the COMSVQ-CS curve increases with the same slope as that of the lower-bound.
\begin{figure}
  \begin{center}
  \includegraphics[width=\columnwidth,height=7.5cm]{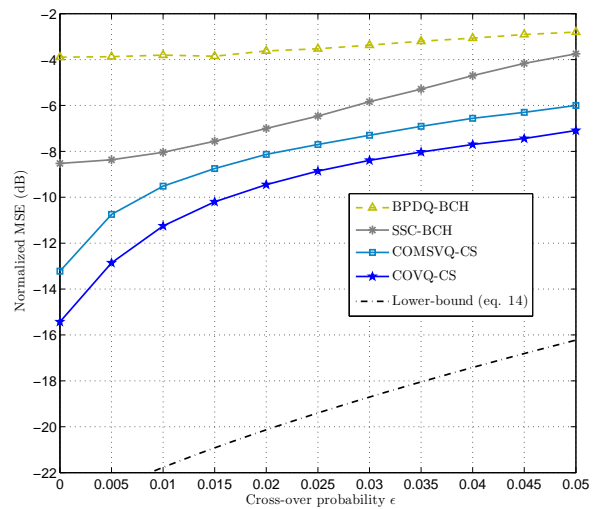}\\
  \caption{NMSE (in dB) as a function of channel bit cross-over probability  $\epsilon$. Simulation parameters are chosen as $N=12$, $K=2$, $M=9$ and $R=15$ bits/vector for clean CS measurements.}
  \label{fig:MSE_eps1}
  \end{center}
\end{figure}

\begin{figure}
  \begin{center}
  \includegraphics[width=\columnwidth,height=7.5cm]{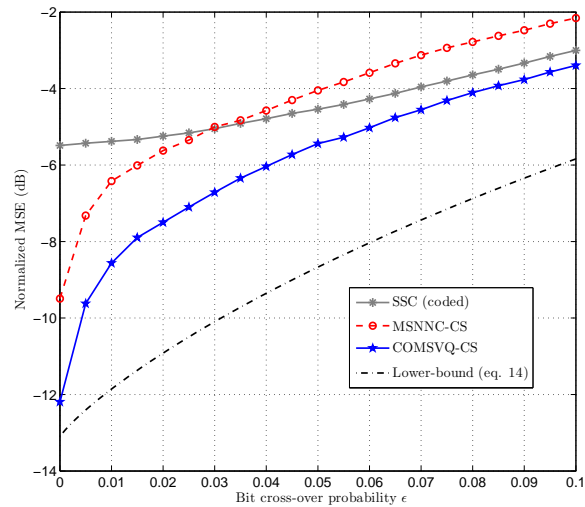}\\
  \caption{NMSE (in dB) as a function of channel bit cross-over probability $\epsilon$. Simulation parameters are chosen as $N=32$, $K=3$, $M=20$ and $R=20$ bits/vector for clean CS measurements.}
  \label{fig:MSE_eps}
  \end{center}
\end{figure}

As a final remark, we mention that the VQ, in general, is known to be theoretically the optimal block coding strategy. However, its computational and memory complexity is an issue, which has been addressed using multi-stage VQ in the current work. In our simulations, the implementation of COVQ-CS and COMSVQ-CS (using two stages) might not be performed beyond $R=12$ bits/vector for a dimension $N=12$ (or slightly more), and $R=30$ for a dimension $N=32$, respectively. If an implementation of a VQ for CS measurements of sparse sources with close to real-life dimensions (e.g., $N=256$ or even higher) is desired, one needs to consider a VQ with multiple stages (more than two). Another alternative is to use VQ for a high-dimensional source vector by segmenting the source into different small patches of information, and use the segmented patches for training.\footnote[1]{In the context of this paper, the source can be thought of as the sparse representation (e.g., wavelet coefficients, etc.) of an image.} This approach, for example, is used in vector quantization of images \cite[Chapter 11]{91:Gersho} or in image denoising \cite{14:Turek}.

\section{Conclusion} \label{sec:conclusion}
We have developed optimum joint source-channel vector quantization schemes for CS measurements. We have derived necessary conditions for optimality of VQ encoder-decoder pair with respect to end-to-end MSE. One interesting result of the optimal conditions is that the CS reconstruction should be performed MMSE-wise at the encoder side rather than the decoder side. We have also provided a theoretical lower-bound on the MSE performance based on the fact that the end-to-end MSE can be decomposed into CS reconstruction MSE and quantized transmission MSE without loss of optimality. Using the resulting optimal conditions, we have proposed a practical encoder-decoder design through an iterative algorithm referred to as COVQ-CS. Moreover, the encoding complexity of VQ was addressed using the MSVQ where we have approximated the necessary optimal conditions by applying a multi-stage structure which has led to the design of COMSVQ-CS. Numerical results show promising performance of the proposed designs with respect to relevant methods in literature.

\appendix

Note that the end-to-end MSE, $\mathbb{E}[\|\mathbf{X} - \widehat{\mathbf{X}}\|_2^2]$, of the system of \figref{fig:diagram_VQ} is always larger than the MSE of a system with a priori known (oracle) support set $\mathcal{S}$ under the same assumptions. Let the RV's $\mathbf{X}|_\mathcal{S},\widetilde{\mathbf{X}}|_\mathcal{S}, \widehat{\mathbf{X}}|_\mathcal{S} \in \mathbb{R}^N$, respectively, denote the source vector, the MMSE estimation of the source given measurements, and the final reconstruction vector given the known support set $\mathcal{S}$. Therefore, we have
\begin{equation} \label{eq:decompostion 2}
\begin{aligned}
    \mathbb{E}[\|\mathbf{X} - \widehat{\mathbf{X}}\|_2^2] &\geq \mathbb{E}[\|\mathbf{X}|_\mathcal{S} - \widehat{\mathbf{X}}|_\mathcal{S}\|_2^2]& \\
    &=\mathbb{E} [\|\mathbf{X}|_\mathcal{S} -  \widetilde{\mathbf{X}}|_\mathcal{S}\|_2^2] + \mathbb{E}[\|\widetilde{\mathbf{X}}|_\mathcal{S} - \widehat{\mathbf{X}}|_\mathcal{S}\|_2^2],&
\end{aligned}
\end{equation}
where, the equality in \eqref{eq:decompostion 2} follows from the same reasoning as that of \eqref{eq:MSE COVQ}.\footnote[2]{We drop the dependency of $\widetilde{\mathbf{X}}|_\mathcal{S}$ on $\mathbf{Y}$ for simplicity of notation.}  Let us first develop a lower-bound on $\mathbb{E} [\|\mathbf{X}|_\mathcal{S} -  \widetilde{\mathbf{X}}|_\mathcal{S}\|_2^2]$. Defining $\mathbf{\Phi}_\mathcal{S} \in \mathbb{R}^{M \times K}$ as a sub-matrix of $\mathbf{\Phi}$ formed by choosing its columns indexed by the elements of $\mathcal{S}$, then for a single realization of $\mathcal{S}$, we have

\begin{equation} \label{eq:cs dist}
\begin{aligned}
       &\mathbb{E} [\|\mathbf{X}|_\mathcal{S}   - \widetilde{\mathbf{X}}|_\mathcal{S}\|_2^2]
       \stackrel{(a)}{=} \textrm{Tr}\left\{\left(\mathbf{I}_K + \frac{1}{\sigma_w^2} \mathbf{\Phi}_\mathcal{S}^\top \mathbf{\Phi}_\mathcal{S}\right)^{-1}\right\} & \\
       &\stackrel{(b)}{\geq} \frac{K^2}{K + \frac{1}{\sigma_w^2} \textrm{Tr}\{\mathbf{\Phi}_\mathcal{S}^\top \mathbf{\Phi}_\mathcal{S}\}}
       \geq \frac{K}{1 + \frac{1}{\sigma_w^2} \lambda_{\max}\{\mathbf{\Phi}_\mathcal{S}^\top \mathbf{\Phi}_\mathcal{S}\}} & \\
        &\stackrel{(c)}{\geq} \frac{K}{1 + \frac{1}{\sigma_w^2} (1 + (K+1)\mu)},&
\end{aligned}
\end{equation}
where in $(a)$, we use the MMSE estimation error of a Gaussian source $\mathbf{X_\mathcal{S}} \sim \mathcal{N}(\mathbf{0},\mathbf{I}_K)$ (elements of $\mathbf{X}|_\mathcal{S}$ within the support set) in white Gaussian measurement noise $\mathbf{W} \sim \mathcal{N}(\mathbf{0}, \sigma_w^2\mathbf{I}_M)$ (see \cite[Theorem 11.1]{93:Kay} for details). Also, $(b)$ follows from the fact that for a given positive-definite matrix $\mathbf{B} \in \mathbb{R}^{K\times K}$, $\textrm{Tr}\{\mathbf{B}\} \textrm{Tr}\{\mathbf{B}^{-1}\} \geq K^2$ which can be shown using the Cauchy-Schwarz inequality (see e.g. \cite[Lemma 2]{03:Shengli}). Further, $(c)$ holds since all eigenvalues of $\mathbf{\Phi}_\mathcal{S}^\top\mathbf{\Phi}_\mathcal{S}$ are upper-bounded by $1 + (K+1)\mu$  using Gershgorin disc theorem \cite[Theorem 8.1.3]{96:Golub}. Now, since the oracle support set is drawn uniformly at random from all ${N \choose K}$ possibilities, the final inequality in \eqref{eq:cs dist} is the lower-bound for all realizations of $\mathcal{S}$.

Next, we develop a lower-bound on the quantized transmission distortion $\mathbb{E}[\|\widetilde{\mathbf{X}}|_\mathcal{S} - \widehat{\mathbf{X}}|_\mathcal{S}\|_2^2]$.
It should be noted that the elements of $\widetilde{\mathbf{X}}|_\mathcal{S}$ within the known support set, denoted by $\widetilde{\mathbf{X}}_\mathcal{S} \in \mathbb{R}^K$, are Gaussian with the covariance matrix (see \cite[Theorem 10.3]{93:Kay})
\begin{equation} \label{eq:mean cov}
\begin{aligned}
    &\text{cov}[\widetilde{\mathbf{X}}_\mathcal{S}] = \left(\mathbf{I}_K + \frac{1}{\sigma_w^2} \mathbf{\Phi}_\mathcal{S}^\top \mathbf{\Phi}_\mathcal{S}\right)^{-1}.&
\end{aligned}
\end{equation}
In order to find the minimum distortion, or distortion-rate function, caused by quantization of a sparse source $\widetilde{\mathbf{X}}|_\mathcal{S}$, a natural approach is to let the quantizer encoder first encode the support set elements using $\log_2 {N \choose K}$ bits (since the elements of $\mathcal{S}$ are drawn uniformly) which can be received without loss at the decoder, and then encode the correlated Gaussian vector $\widetilde{\mathbf{X}}_{\mathcal{S}}$ using $R - \log_2 {N \choose K}$ bits. It is shown in \cite{12:Weidmann} that the distortion rate function of this \textit{splitting approach} coincides with the distortion-rate function for a sparse source (with Gaussian non-zero coefficients and uniformly distributed sparsity pattern) asymptotically (in quantization rate $R$ with $R \gg \log_2 {N \choose K}$). Then, it follows that
\begin{equation} \label{eq:quant dist}
\begin{aligned}
    \mathbb{E}[\|\widetilde{\mathbf{X}}|_\mathcal{S} - \widehat{\mathbf{X}}|_\mathcal{S}\|_2^2] &\geq c 2^{-2C\left(\frac{R  - \log_2 {N \choose K}}{K}\right)} \hspace{0.1cm} \text{det}\left(\text{cov}[\widetilde{\mathbf{X}}_{\mathcal{S}}]\right)^{\frac{1}{K}}, &
\end{aligned}
\end{equation}
where $c = 2\left(\frac{K}{2} \Gamma\left(\frac{K}{2}\right) \right)^{\frac{2}{K}} \left(\frac{K+2}{K}\right)^{\frac{K}{2}}$. The right-hand side in \eqref{eq:quant dist} is indeed the distortion-rate function of the correlated Gaussian source $\widetilde{\mathbf{X}}_\mathcal{S}$ \cite{80:Yamada} incurred by transmission over the DMC with capacity $C$ (see, e.g. source-channel separation theorem \cite[Chapter 7]{06:Cover}). Further, we have
\begin{equation} \label{eq:LB det}
\begin{aligned}
    &\text{det}\left(\text{cov}[\widetilde{\mathbf{X}}_{\mathcal{S}}]\right)^{\frac{1}{K}} 
       = \frac{1}{\prod_{k=1}^K \left(1 + \frac{1}{\sigma_w^2} \lambda_k \left(\mathbf{\Phi}_{\mathcal{S}}^\top \mathbf{\Phi}_{\mathcal{S}}\right)\right)^{\frac{1}{K}}} & \\
    &\geq \frac{1}{1 + \frac{1}{\sigma_w^2} \lambda_{\max} \left(\mathbf{\Phi}_{\mathcal{S}}^\top \mathbf{\Phi}_{\mathcal{S}}\right)}
    \stackrel{(a)}{\geq} \frac{1}{1 + \frac{1}{\sigma_w^2} (1 + (K+1)\mu)},&
\end{aligned}
\end{equation}
where in $(a)$ we use the fact that all eigenvalues of $\mathbf{\Phi}_\mathcal{S}^\top\mathbf{\Phi}_\mathcal{S}$ are upper-bounded by $1 + (K+1)\mu$  using Gershgorin disc theorem. Combining \eqref{eq:LB det} with \eqref{eq:quant dist}, \eqref{eq:cs dist} and \eqref{eq:decompostion 2} concludes the proof. 

\bibliographystyle{IEEEtran}
\bibliography{IEEEfull,bibliokthPasha}

\vspace{-1cm}
\begin{IEEEbiography}[{\includegraphics[width=0.82in,height=1.25in]{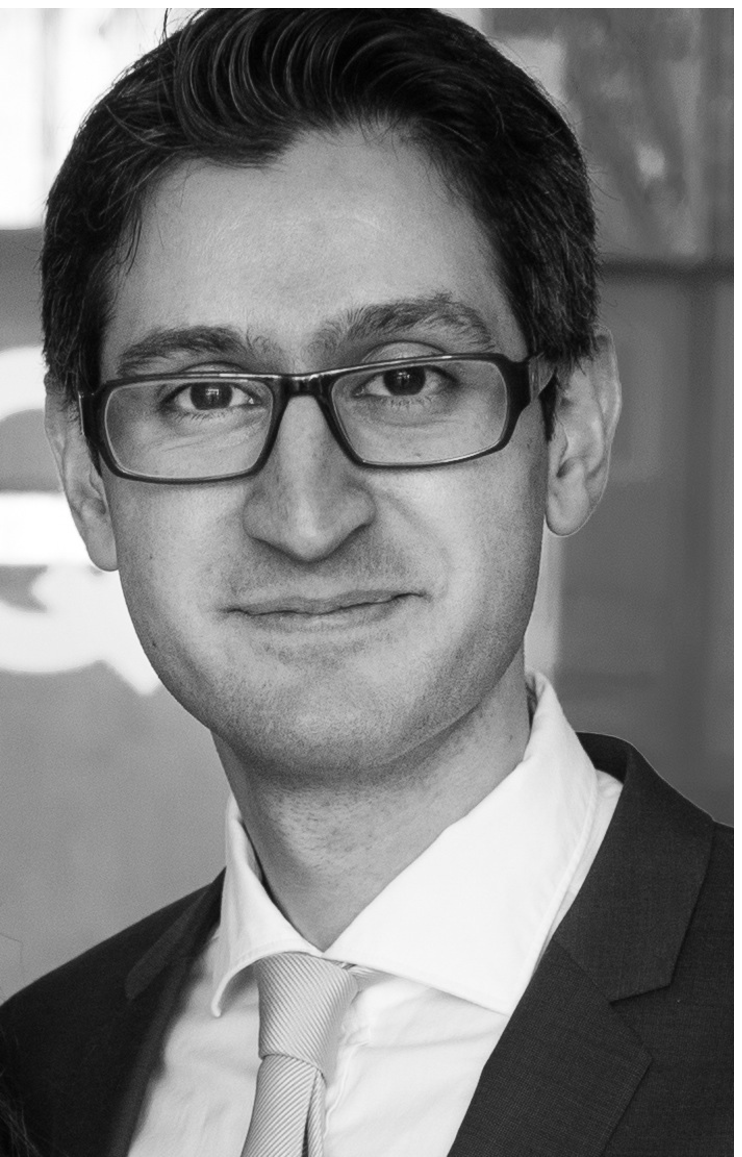}}]{Amirpasha Shirazinia}
(S'08) received the BSc and MSc degrees, in communication systems, from Ferdowsi University of Mashhad, and Amirkabir University of Technology (Tehran Polytechnic), Iran, in 2005 and 2008, respectively. He received his PhD degree in Telecommunications from KTH--Royal Institute of Technology,  Department of Communication Theory, Stockholm, Sweden, in 2014. Dr. Shirazinia is currently a post-doctoral researcher at Uppsala University, Sweden. His research interests include joint source-channel coding, statistical signal processing, compressed sensing and sensor networks. 
\end{IEEEbiography}
\vspace{-1.25cm}
\begin{IEEEbiography}[{\includegraphics[width=1in,height=1.3in,clip,keepaspectratio]{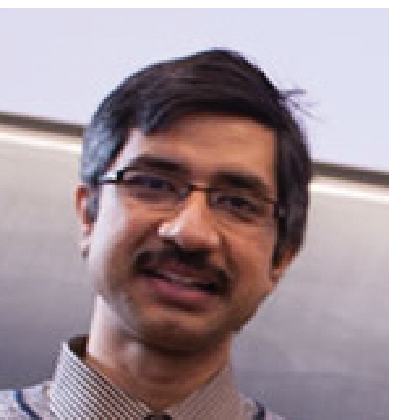}}]{Saikat Chatterjee}
is a researcher jointly with the Communication Theory and the Signal Processing Divisions, School of Electrical Engineering, KTH-Royal Institute of Technology, Sweden. He was also with the Sound and Image Processing Division at the same institution as a post-doctoral fellow for one year. Before moving to Sweden, he received Ph.D. degree in 2009 from Indian Institute of Science, India. He was a co-author of the paper that won the best student paper award at ICASSP 2010. His current research interests are source coding, speech and audio processing, estimation and detection, sparse signal processing, compressive sensing, wireless communications and computational biology.
\end{IEEEbiography}

\vspace{-1.25cm}
\begin{IEEEbiography}[{\includegraphics[width=1in,height=1.25in,clip,keepaspectratio]{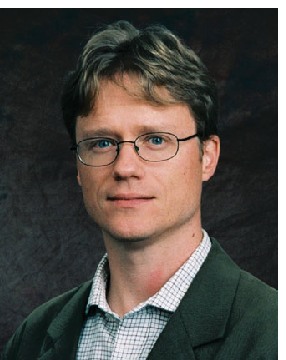}}]{Mikael Skoglund}
(S'93-M'97-SM'04) received the Ph.D.~degree in 1997 from Chalmers University of Technology, Sweden.  In 1997, he joined the KTH-Royal Institute of Technology, Stockholm, Sweden, where he was appointed to the Chair in Communication Theory in 2003.  At KTH, he heads the Communication Theory Division and he is the Assistant Dean for Electrical Engineering. He is also a founding faculty member of the ACCESS Linnaeus Center and director for the Center Graduate School.

Dr.~Skoglund has worked on problems in source-channel coding, coding and transmission for wireless communications, Shannon theory and statistical signal processing. He has authored and co-authored more than 100 journal and 250 conference papers, and he holds six patents.

Dr.~Skoglund has served on numerous technical program committees for IEEE sponsored conferences (including ISIT and ITW). During 2003--08 he was an associate editor with the IEEE Transactions on Communications and during 2008--12 he was on the editorial board for the IEEE Transactions on Information Theory.
\end{IEEEbiography}

\end{document}